\numberwithin{equation}{section} % in amsmath
 \newtheorem{lemma}{Lemma}[section]
 \newtheorem{theorem}[lemma]{Theorem}
 \newtheorem{claim}[lemma]{Claim}
 \newtheorem{corollary}[lemma]{Corollary}
 \newtheorem{definition}[lemma]{Definition}
 \newtheorem{rem}[lemma]{Remark}
\newenvironment{remark}{\begin{rem}}{\hspace*{\fill}$\diamondsuit$\end{rem}}
 \newtheorem{ex}[lemma]{Example}
\newenvironment{example}{\begin{ex}}{\hspace*{\fill}$\diamondsuit$\end{ex}}
\renewcommand{\emptyset}{\varnothing}
\begin{document}

\title{WEB SIMILARITY IN SETS OF SEARCH TERMS USING DATABASE QUERIES}
\author{Andrew R. Cohen* \thanks
{Andrew Cohen is with the Department of Electrical and Computer Engineering,
Drexel University.
Address: A.R. Cohen, 3120--40 Market Street,
Suite 313, Philadelphia, PA 19104,
USA. Email: {\tt andrew.r.cohen@drexel.edu}}
and Paul M.B. Vit\'{a}nyi
\thanks{
Paul Vit\'{a}nyi is with the national research center for mathematics
and computer science in the Netherlands (CWI),
and the University of Amsterdam.
Address:
CWI, Science Park 123,
1098XG Amsterdam, The Netherlands.
Email: {\tt Paul.Vitanyi@cwi.nl}.
}}

\maketitle

\begin{abstract}
    
  Normalized web distance (NWD) is a similarity or normalized semantic distance
  based on the World Wide Web 
  or another large electronic database, for instance Wikipedia, 
  and a search engine that returns reliable aggregate page counts. 
  For sets of search terms the NWD 
  gives a common similarity (common semantics)
  on a scale from 0 (identical) to 1 (completely different).
  The NWD approximates the similarity of members of a set
  according to all (upper semi)computable properties. We develop the theory
  and give applications of classifying using Amazon,
  Wikipedia, and the NCBI website from the National Institutes of Health. 
  The last gives new correlations between health hazards.
  A restriction of the NWD to a set of two yields the earlier normalized
  google distance (NGD) but no combination of the NGD's of pairs in a set can extract the
  information the NWD extracts from the set. The NWD enables a new contextual 
(different databases) learning approach 
  based on Kolmogorov complexity theory that incorporates knowledge from these databases.

{ACM classification}

(1) CCS -- Information systems--- World Wide Web ---Web searching and information 
discovery 

(2) CCS--- Information Retrieval

{\em Index Terms}---
Normalized web distance, pattern recognition, 
data mining, similarity, classification, Kolmogorov complexity, 
\end{abstract}

\section{Introduction}
\label{sect.intro}
Certain objects are computer files that carry all their 
properties in themselves. For example the scanned handwritten digits in the
MNIST database \cite{LCB}. 
However, there are also objects that are given by name, such as
`red,' `three,' `Einstein,' or `chair.' Such objects acquire their meaning
from the common knowledge of mankind. We can give objects either 
as the object itself or as the name of that object, 
such as the literal text of the work ``Macbeth by Shakespeare''
or the name ``Macbeth by Shakespeare.'' 
We focus on the name case and provide semantics
using the background information of a large data base such as
the World Wide Web or Wikipedia,
and a search engine that produces reliable aggregate page counts. 
The frequencies involved enable us to compute a distance for
each set of names. This is the web information distance of that set or
more properly the web information diameter of that set. 
The normalized form of this distance expresses similarity, that is,
the semantics (properties, features) the names in the set have in common.
Insofar as the distance or diameter
of the set as discovered by this process 
approximates the common semantics of the objects in the set in human society, 
the above distance expresses this common semantics. 
%\begin{example}
%\rm
%To give an example, ``floor'' ``door'' ``tiles'' ``toilet'' have (a part)
%of their common semantics in that they are parts of a ``house.''
%Yet the semantic distances between pairs of those names does not help:
%the semantics of ``floor'' and ``door'' says that both are wooden objects
%and the semantics of ``tiles'' and ``toilet'' says they are both porcelain
%objects. For the sake of the argument we assume that the relative
%semantics of both pairs have the same value. 
%
%``Horse'' is semantically close to ``centaur'' and ``centaur'' is 
%semantically close to ``man.'' The latter is semantically
%close to ``horse.''
%
%``One'' is semantically close to ``two'' (as numericals)
% and ``two'' is semantically
%close to ``pair'' (same numerical value).
% The latter is semantically close to ``singleton'' because
%both express a quantity in verbal terms.
%``Singleton'' is semantically close to ``one'' because both express the
%same numerical value. Yet the common
%semantics of ``one'' ``two'' ``pair'' ``singleton'' is low.
%\end{example}
The term ``name'' is used here synonymously
with ``word''  ``search term'' or ``query.'' 
The normalized distance above is called the normalized web distance (NWD). 
To compute $NWD(X)$ of a set 
$X= \{\mbox{\rm name}_1, \ldots , \mbox{\rm name}_n\}$ we just use the number
of web pages returned on the query ``$\mbox{\rm name}_1 \; \ldots \;  
\mbox{\rm name}_n$,'' the minimum number of web pages returned on the query
for a name in $X$, the maximum number of web pages returned on the query
for a name in $X$, and the total number of web pages capable of being
returned.
A restriction of the NWD to a set of two yields the earlier Normalized Google Distance (NGD) \cite{CV07}
but no combination of the NGD's of pairs in a set can extract the
information the NWD extracts from the set as we shall show.
\subsection{Goal}
Suppose we want to classify a new object in the most appropriate 
one of several classes of objects. The objects in each class have
a certain similarity to one another. For example all the
objects may be red, flowers, and so on. We are talking here of properties
which all the objects in a class share.  
Intuitively the new object should go into the class of which the similarity
changes as little as possible under the insertion. Among those we should choose
the class of maximal similarity. A red flower may go into
the class in which all the objects are red flowers.
To achieve this goal we need to define a measure of similarity between
the objects of a class. This similarity measure is associated with the 
class and to compare different classes it should be relativized. Namely,
if in class $C_1$ all objects are 1\% the same and in class $C_2$ all objects
are 50\% the same while all objects in $C_1$ are 1000 times larger than
all objects in $C_2$, then in absolute terms the objects in $C_1$ are more
the same than the objects in $C_2$. Therefore the measure of similarity of
a class should be relative and expressed by a number between 0 and 1. 
The NWD proposed here is such a measure of similarity. 
%It is   
%an extension of the NGD \cite{CV07} that is much more
%suited for classification, see Section~\ref{sect.cl}.

\subsection{Semantics}\label{sect.cs}
The NWD is an extension to sets of the Normalized Google
Distance (NGD) \cite{CV07} which computes a distance between two names. 
Since we deal with names it may be appropriate to equate ``similarity''
with {\em relative semantics} for a pair of names and {\em common semantics}
for a set of more than two names. 
For example, the common semantics of $\{$red, green,
blue, yellow$\}$ comprises the notion ``color'' and the
common semantics of $\{$one, two, three, four$\}$ comprises the notion
``number.'' 
A theory of common semantics
of a set of objects as we develop it here is based on 
(and unavoidably biased by) a
background contents
consisting of a database and a search engine. An example 
is the set of pages constituting the world-wide-web and 
a search engine like Google.
%Linguists judge
%the accuracy of Google counts trustworthy enough:
In \cite{KL05} (see also the many references to related research)
it is shown that web searches for rare two-word phrases
correlated well with the frequency found in traditional corpora,
as well as with human judgments of whether those phrases were natural.
The common semantics relations between a set of objects is 
distilled here from the web pages
by just using the number of web pages in which the names 
of the objects occur, singly
and jointly (irrespective of location or multiplicity).
Therefore the common semantics is that of a particular database 
(World Wide Web, Wikipedia, Amazon, Pubnet) and an associated search engine. 
Insofar as the effects of a database-search engine 
pair approximates the utterances
of a particular segment of human society we can identify the 
NWD associated with a set of objects with the (normalized) common semantics
of that set in that segment of human society.

\subsection{NWD and NGD}
It is impossible in general to use combinations of NGD's to 
compute the common semantics
of a set of more than two names. This is seen as follows.
The only thing one can do using the NGD is to compute the NGD's
between all pairs of members in the set and take the minimum,
the maximum, the average, or something else.
This means that one uses the relative semantics between all pairs of members
of the set but not the 
semantics that all members of the set have in common.
For example, each pair may have a lot of relative semantics but 
possibly different relative semantics for each pair. That these semantics
are different may not be inferable from the NGD's. The conclusion may be
that the members of the set have a lot in common.  But in actual fact the 
set may have little or no semantics in common at all. 

The common semantics
of all names in the set is accounted for by the NWD. 
Therefore using the NWD may give 
very different results from using the NGD's.
An example using Google counts is given by 
homonyms such as ``grave,'' ``iron,'' and ``shower.''
On 18 September 2019 Google gave ``grave iron shower'' 12.900.000 results
indicating that this triple of words have little in common. 
But ``grave iron'' got 168.000.000 results,
``iron shower'' got 478.000.000 results, and
``grave shower'' got 46.000.000 results indicating that each of these three
word pairs have more in common than the word triple.
We defer further discussion to Section~\ref{sect.compare} when the 
necessary formal tools are in place.

%The NWD is suited to characterize clusters in the data.
%Namely, a set for which the NWD is small is a cluster. A simple heuristic
%to clustering is as follows. 
%Assume there are $n$ data $D=\{1,\ldots, n\}$. If we are looking for 
%clusters of cardinality $k$ then choose $k$ elements from $D$
%and compute their NWD. There are ${n \choose k}$ choices.
%Order the NWDs from the least values upwards and retain only NWDs associated
%with sets that have an empty intersection with sets associated with
%earlier NWDs. This results in an ordered list of $n/3$ NWDs.
%The process requires ${n \choose k} =O(n^k)$ NWD computations and
%sorting of the resulting NWDs and determining the emptyness of 
%the intersections of the associated sets.

\subsection{Classification}\label{sect.cl}
In classification we use the semantics the 
objects in a class have in common. Up till now this was replaced
by other measures such as distances in Euclidean space. The NWD
of a class expresses directly (possibly an approximation of) the common
semantics of the objects in the class. According to Section~\ref{sect.cs} 
this cannot be achieved by combinations of
the relative semantics between pairs of objects in the class.
%For example, if there are a number of classes of 
%objects one can compute the NGD between the new object
%to be inserted and all members of each of the classes. 
%Subsequently one can select the class
%where the NGD distances from its members to the new object are minimal
%according to some criterion. However, this procedure ignores
Therefore classification using the NGD's alone may be inferior
to using the NWD's which take crucial information into account as is shown by
Theorem~\ref{ex.NGDvsNWD}. It shows also
that any method using NGD's 
also has a much larger computational complexity. % Section~\ref{sect.compare}. 

\subsection{Background}
To develop the theory behind the NWD we 
consider the information in individual
objects. These objects are finite and expressed as finite binary strings. 
The classic notion of 
Kolmogorov complexity \cite{Ko65} is an objective measure 
for the information in 
a {\em single} object, and information distance measures the information 
between a {\em pair} of objects \cite{BGLVZ98}. 
To develop the NWD we use the new notion of
common information between {\em many} objects \cite{Li08,CV13}.

\subsection{Related Work}\label{sect.relwork}
To determine word similarity or word associations
has been topical in  cognitive psychology \cite{LD97},
linguistics, natural language processing, search engine theory,
recommender systems, and computer science. 
One direction is to use word (phrases)
frequencies in text corpora to develop measures for word similarity
or word association, see the surveys in \cite{TC03,TKS02}.
A successful approach is Latent Semantic Analysis
(LSA) \cite{LD97} that appeared in various forms in a great
number of applications. LSA and its relation to the 
NGD approach is discussed in \cite{CV07}.
As with LSA, many
other previous approaches of extracting correlations 
from text documents are based
on text corpora that are many order of magnitudes smaller, and that are
in local storage,  and on
assumptions that are more refined, than what we propose.
Another recently successful approach is \cite{MCCD13} 
which uses the large text corpora available at Google
to compute so-called word-vectors of two types: predicting the context
or deducing the word from the context. This brute-force approach yields
word analogies and other desirable phenomena. For example, the word vector
of ``king'' minus that of ``man'' plus that of ``woman'' gives a 
word vector near that of ``queen.'' However, just as the other methods
mentioned it gives no common semantics of a set of words but
only a distance between two words like the NGD. Counterexamples to 
using the NGD as 
in Theorem~\ref{ex.NGDvsNWD} work here too: large relative semantics
between every pair of words of a set may not imply 
large common semantics of these words. 
One needs a relation between all the objects like the NWD does.
The NWD makes use of Internet queries. The database used is the Internet 
which is the largest database on earth but this database is a public
facility which does not need to be stored.
To use LSA we require large text corpora in local storage and
to compute word vectors we 
require even larger corpora of words in local storage than LSA does.
Similarly, \cite{CS04,BbA05} and the many references cited there,
use the web and Google counts to identify 
lexico-syntactic patterns or other data.
Again, the theory, aim, feature analysis,
and execution are different from ours, and cannot
meaningfully be compared. Essentially, the NWD method below
automatically extracts semantic relations between sets of arbitrary objects
from the web in a manner that is feature-free,
up to the data base and search-engine used, and computationally feasible.
%A common semantics of all objects in a set as we define 
%and use in this paper seems a
%new direction and may not have been defined before.

%The ancestral normalized Google distance (NGD) works for
%any search engine that gives an
%aggregate page count for search terms.
%The similarity or relative semantics between {\em pairs} 
%of search terms was defined in \cite{CV07} and demonstrated in practice  
%by using the World Wide Web as database and Google as search engine.
%See for example \cite{BMI07,GKAH07,XWF08,WM09,CL10} and the 
%many references to \cite{CV07} in Google scholar.
In \cite{Li08} the notion is introduced of the information required to
go from any object in a finite multiset 
(a set where a member can occur more than once) of objects 
to any other object in the
set. 
Let $X$ denote a finite multiset of
$n$ finite binary strings defined by 
$\{x_1, \ldots, x_n\}$, the constituting elements 
ordered length-increasing lexicographic.
We identify the $n$th tring in $\{0,1\}^*$ ordered
lexicographic length-increasing with the $n$th natural
number $0,1,2, \ldots .$
We denote the
natural numbers by ${\cal N}$.
A {\em pairing function} $\langle \cdot, \cdot \rangle:
{\cal N} \times {\cal N} \rightarrow {\cal N}$  uniquely encodes two
natural numbers (or strings) into
a single natural number (or string) by a primitive recursive bijection.
One of the best-known ones is the computationally invertible
Cantor pairing function defined by
$\langle a,b \rangle = \frac{1}{2} (a+b)(a+b+1)+a$.

The {\em information distance} in $X$ is defined
by 
\[
EG_{\max}(X) = \min \{|p|:  U(p, \langle x,n \rangle)=X, \; 
\mbox{\rm for all } x \in X\}.
\]
(see Appendix~\ref{sect.kolmcomp} for the undefined notions like the
universal computer $U$).
For instance, with $X=\{x,y\}$ the quantity $EG_{\max}(X)$ is the least 
number of bits in a program to
transform $x$ to $y$ and $y$ to $x$. In \cite{Vi11} the mathematical
theory is developed further and the difficulty of normalization
is shown. In \cite{CV13} the normalization is given, justified,
and many applications are given of using compression to classify
objects given as computer files, for example related to the MNIST data base 
of hand written digits and to stem cell classification. 
%This raises the question whether the NGD approach can be extended from
%pairs to sets.

\subsection{Results}
The NWD is a similarity (a common semantics) between all search terms 
in a {\em set}. (We use set rather than multiset as in \cite{CV13} 
since a set seems more
appropriate than multiset in the context of search terms.) 
The NWD can be thought of as a diameter of the set. 
For sets of cardinality two this diameter reduces to a distance
between the two elements of the set.  
The NWD can be used for the classification
of an unseen item into one of several classes
(sets of names or phrases). This is required in
constructing classes of more than two members while 
the NGD's as in \cite{CV07} suffice for classes of two members.
%In the latter solution inevitably information gets lost.

The basic concepts like the web events, web distribution, and web code
are given in Section~\ref{sect.google}. These are similar to
what is used in \cite{CV07} for the NGD. 
The remaining derivation and results are 
of necessity new and different. We determine the length of 
a single shortest binary program to compute from any 
web event of a single member 
in a set to the web event associated with the whole set 
(Theorem~\ref{theo.just}). 
%Similarly, for the length of a shortest program
%to compute from any web event of a single member in a set to any
%other web event of a single member in Corollary~\ref{cor.just}.
The mentioned length is an absolute information distance associated
with the set. It is incomputable (Lemma~\ref{lem.egmax}).
It can be large while a set
has similar members and small when the set has dissimilar members. 
This depends on the relative size of the difference between members.
Therefore we normalize to express the 
relative information distance which we associate with similarity 
between members of the set.
We approximate the incomputable normalized version with 
the computable NWD (Definition~\ref{def.NWD}).
In Section~\ref{sect.compare} we compare the NWD and the earlier NGD 
with respect to the computational complexity
(expressed in required number of queries) and accuracy.
The NWD method requires less queries compared to the NGD method while the latter usually also yields inferior results.
In Section~\ref{sect.theory} we present  properties of the NWD 
such as the range of the NWD (Lemma~\ref{lem.0}), whether and how 
it changes under adding members
(Lemma~\ref{claim.1}),
and that it does not satisfy the triangle inequality 
and hence is not metric (Lemma~\ref{theo.triangle}). 
Theorem~\ref{theo.ideal} and Corollary~\ref{cor.ideal} show that 
the NWD approximates 
the common similarity of the queries
in a set of search terms (that is, a common
semantics). 
We subsequently apply the NWD to various data sets
based on search results from Amazon, Wikipedia and the National Center 
for Biotechnology Information (NCBI) website from the U.S. National 
Institutes of Health in Section~\ref{sect.appl}. 
For the methodology of the examples we refer to Section~\ref{sect.metho}.
We treat strings and self-delimiting strings in Appendix~\ref{sect.set},
computability notions in Appendix~\ref{sect.comp},
Kolmogorov complexity in Appendix~\ref{sect.kolmcomp}, and metric of 
sets in Appendix~\ref{sect.metric}. The proofs are deferred to
Appendix~\ref{sect.proofs}. 

\section{Web Distribution and Web Code}
\label{sect.google}
We give a derivation that
holds for {\em idealized} search engines that return reliable 
aggregate page counts from
their {\em idealized} data bases. 
For convenience we call this the ``web'' consisting of 
``web pages.''
Subsequently we apply the idealized theory to real problems using real
search engines on real data bases.

\subsection{Web Event}\label{sect.gd}
The set of singleton {\em search terms}
is denoted by ${\cal S}$,
a {\em set of search terms} is 
$X=\{x_1, \ldots, x_n\}$ with
$x_i \in {\cal S}$ for $1 \leq i \leq n$,
and ${\cal X}$ denotes the set of such $X$.
Let the set of web pages indexed (possible of being returned)
by the search engine be $\Omega$. 
\begin{definition}\label{def.webevent}
\rm
We define the {\em web event} $e(X) \subseteq \Omega$ by the set of web pages 
returned by the search engine doing a search for $X$
such that each web page in the set contains occurrences
of all elements from $X$. 

If $x,y \in S$ and $e(x)=e(y)$ then
$x \sim y$ and the equivalence class $[x]=\{y\in S: y\sim x \}$.
Unless otherwise stated, we consider all singleton search terms
that define the same web event as the same term. Hence we deal actually
with equivalence classes $[x]$ rather than $x$. However, for ease of
notation we write $x$ in the sequel and consider this to mean $[x]$.
\end{definition}

If $x \in S$ then the {\em frequency} of $x$ is $f(x)=|e(x)|$; 
if $X=\{x_1, \ldots , x_n\}$, then $e(X)=e(x_1) \bigcap
\cdots \bigcap e(x_n)$ and $f(X)=|e(X)|$. 
The web event $e(X)$ embodies all direct context
in which all elements from $X$ simultaneously occur in these web pages.
Therefore web events capture in the outlined sense
all background knowledge about this combination of search terms 
on the web. 

\subsection{The Web Code}\label{sect.gc}
It is natural
to consider code words for web events. We base those code words on the
probability of the event. 
%The value of $|\Omega| \approx 25.270.000.000$ by using Google and searching
%the Internet for the name ``the.'' This is also most likely
%an upper bound on the number of web pages accessed by other search engines.
%We assume that $f < 1,000,000,000$ for 
%99\% of the $f(x)$ with $x \in {\cal S}$, and
%for $X=\{x_1, \ldots , x_n\}$ with $n \geq 1$ that
%$f(X) < (f(x_1)\times \cdots \times f(x_n))^{1/n^2}$.
%We assume further that an average web page contains at 
%most $w=1,000$ different elements from ${\cal S}$ and that $f(X)$
%for $|X| >5$ is negligible. 
%Hence as a fair estimate we may use
%\begin{align*}
%\alpha & < \frac{1}{|\Omega|}\sum_{i=1}^5 { w \choose i} f^{1/i}
%\\& <  \frac{1}{|\Omega|} \sum_{i=1}^{5} \frac{w^i f^{1/i}}{i(i-1) \cdots 1} 
%\\& < 10,000.    .
%\end{align*}
%We use $\alpha = 10,000$.
%Consider the set
%\[
%T_{w,s}= \{(w,s): w \in \Omega, \; s \in S , \mbox{\rm $s$ occurs in $w$}\}.
%\]
%Then $\alpha = \sum_{w \in \Omega, \; s \in S} |T_{w,s}|/|\Omega|$ 
%is the average number of search terms per web page in $\Omega$. 
Define the {\em probability} $g(X)$ of $X$ 
as $g(X)=f(X)/N$ with $N= \sum_{X \in {\cal X}} f(X)$.
This probability may change over time,
but let us imagine that the probability 
holds in the sense of an instantaneous snapshot. 
%The real situation
%will be an approximation of this.
A derived notion is the average number of 
different sets of search terms per web page $\alpha$. Since
$\alpha = \sum_{X \in {\cal X}} f(X) /|\Omega|$ we have $N=\alpha |\Omega|$.

A probability mass function on a known set allows  
us to define the associated prefix-code word 
length (information content) equal to unique decodable code word
length \cite{Kr49,Mc56}. Such a  
prefix code is a code such that no code word is a proper
prefix of any other code word. By the ubiquitous Kraft 
inequality \cite{Kr49},
if $l_1,l_2, \ldots$ is a sequence of positive integers satisfying
\begin{equation}\label{eq.kraft}
\sum_i 2^{-l_i} \leq 1,
\end{equation}
then there is a set of prefix-code words of length $l_1,l_2, \ldots .$
Conversely, if there is a set of prefix-code 
words of length $l_1,l_2, \ldots$
then these lengths satisfy the above displayed equation. By the fact that
the probabilities of a discrete set sum to at most 1,
every web event $e(X)$ having probability $g(X)$ can be 
encoded in a prefix-code word.
\begin{definition}
\rm
The {\em length} $G(X)$ of the {\em web code word} for $X  \in {\cal X}$ is
\begin{equation}\label{eq.g}
G(X)=\log 1/g(X),
\end{equation}
or $\infty$ for $g(X)=0$. The case $|X|=1$ gives the length of the web
code word for singleton search terms.
The logarithms are throughout base 2.
\end{definition}
The web code is a prefix code. The code word associated 
with $X$ and therefore with the web event $e(X)$ can be viewed
as a compressed version of the set of web pages constituting $e(X)$.
That is, the search engine 
compresses the set of web pages
that contain all elements from $X$ into a 
code word of length $G(X)$.
(In the following Definition~\ref{def.egmax} we use the notion of 
$U$ and the prefix Kolmogorov complexity $K$ as in
Appendix~\ref{sect.kolmcomp}.) 
\begin{definition}\label{def.egmax}
\rm
Let $p \in \{0,1\}^*$ and $X \in {\cal X} \setminus S$.
The {\em information $EG_{\max}(X)$ to compute event $e(X)$ from
event $e(x)$ for any $x \in X$} is defined by
$EG_{\max}(X)= \min_p\{|p|:  \mbox{\rm for all}
\; x \in X \; \mbox{\rm we have} \; U(e(x),p)=e(X) \}$.
\end{definition}
In this way $EG_{\max}(X)$ corresponds to the length of a 
single shortest self-delimiting program to compute output $e(X)$ 
from an input $e(x)$
for all $x \in X$.

\begin{lemma}\label{lem.egmax}
The function $EG_{\max}$ is upper semicomputable 
but not computable. 
\end{lemma}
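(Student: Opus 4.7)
The plan is to establish the two assertions separately, mirroring the classical proofs for prefix Kolmogorov complexity $K$.

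For upper semicomputability I would use the standard dovetail. Given $X \in \mathcal{X} \setminus S$, first compute $e(X)$ and each $e(x)$ for $x \in X$ (relying on computability of $e$ from search terms to finite web-event descriptions). Then enumerate binary strings $p$ in length-lexicographic order and, in parallel, simulate $U(e(x),p)$ for every $x \in X$ under an increasing step budget $t$. Maintain a running minimum $m$, initialised at $\infty$; whenever a single $p$ is found such that each of the $|X|$ simulations halts within $t$ steps and returns $e(X)$, replace $m$ by $\min(m,|p|)$. The sequence of successive $m$'s produced as $t \to \infty$ is non-increasing with limit exactly $EG_{\max}(X)$, which is the definition of upper semicomputability.

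For non-computability I would adapt the Berry-paradox-style argument from the incomputability of $K$. I first establish two facts. (a) $EG_{\max}(X) \leq K(X) + O(1)$: since $e$ is computable, a program which ignores its input and instead carries a self-delimiting description of $X$, then internally computes and outputs $e(X)$, witnesses this bound; in particular $EG_{\max}$ is everywhere finite. (b) $EG_{\max}$ is unbounded on $\mathcal{X} \setminus S$: for each $x \in X$ the minimiser in Definition~\ref{def.egmax} is, when its input is fixed to $e(x)$, a valid witness for $K(e(X) \mid e(x))$, so $EG_{\max}(X) \geq K(e(X) \mid e(x))$; one can then choose $X$ so that $e(X) \subseteq e(x)$ is a subset of $e(x)$ that is highly incompressible relative to $e(x)$, driving this conditional complexity above any prescribed level. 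Now suppose for contradiction that $EG_{\max}$ is computable. Then the algorithm that, on input $n$, effectively enumerates $\mathcal{X} \setminus S$ and outputs the first $X_n$ with $EG_{\max}(X_n) > n$ is described by $O(\log n)$ bits, so $K(X_n) \leq \log n + O(1)$. Combining with (a) yields $EG_{\max}(X_n) \leq \log n + O(1) < n$ for all sufficiently large $n$, contradicting $EG_{\max}(X_n) > n$.

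The main obstacle I anticipate is formalising (b) inside the idealised model: one must argue, without further hypotheses on the search engine, that there exist $X$ whose joint web event $e(X)$ is genuinely richer than what any single $e(x)$ alone determines, so that $K(e(X)\mid e(x))$ can be forced to grow. Once that single complexity-theoretic input is in place the two halves combine routinely, the upper-semicomputable dovetail giving one direction and the Berry diagonalisation giving the other.
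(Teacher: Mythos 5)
Your upper--semicomputability half is exactly the paper's argument: dovetail all programs on the inputs $e(x)$, $x\in X$, and track the length of the shortest one seen so far that maps every $e(x)$ to $e(X)$. The non-computability half, however, takes a genuinely different route. The paper does not diagonalize at all: it specializes to two-element sets $X=\{x,y\}$, observes that $EG_{\max}(X)$ then coincides (up to $O(1)$) with the max-information distance of Bennett et al.\ \cite{BGLVZ98}, and invokes the known incomputability of that quantity. Your Berry-style argument --- bound $EG_{\max}$ above by a complexity of its argument, show it is unbounded, and then note that a computable unbounded $EG_{\max}$ would let you find, from $O(\log n)$ bits, a first witness $X_n$ with $EG_{\max}(X_n)>n$ --- is self-contained and does not lean on the cited result, which is a genuine advantage; the paper's reduction is shorter but imports the hard work. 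Two remarks on your version. First, your step (a) as phrased needs $e$ to be computable from the name $X$, which is a modeling assumption the paper never makes; it is cleaner to use the unconditional bound $EG_{\max}(X)\le K(e(X))+O(1)$ (a program that ignores its input and prints $e(X)$) and to run the diagonalization over the tuples of web events $(e(x_1),\ldots,e(x_n))$, which are the actual arguments on which $EG_{\max}$ is (hypothetically) computed and from which $e(X)=\bigcap_i e(x_i)$ is trivially obtained. Second, the obstacle you flag --- that unboundedness of $EG_{\max}$ requires the family of realizable web events to be sufficiently rich --- is real, but it is not a defect specific to your proof: the paper's reduction silently needs the same richness (the incomputability of the pairwise information distance is a statement over all pairs of strings, not over pairs drawn from a possibly sparse set of web events). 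So the two proofs stand or fall on the same idealization, and your argument is acceptable once (a) is restated as above.
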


\begin{theorem}\label{theo.just}
$EG_{\max}(X) = \max_{x \in X} \{K(e(X)|e(x))\}$ up to an additive 
logarithmic term $O(\log \max_{x \in X} \{K(e(X)|e(x))\})$ which we
ignore in the sequel.
\end{theorem}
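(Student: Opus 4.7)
The plan is to prove the two inequalities separately, with the $O(\log k)$ slack (where $k := \max_{x\in X} K(e(X)|e(x))$) appearing only in the harder direction. The easy direction, $k \le EG_{\max}(X)+O(1)$, is immediate from the definitions: let $p^\ast$ realize the minimum of Definition~\ref{def.egmax}, so $|p^\ast| = EG_{\max}(X)$ and $U(e(x),p^\ast)=e(X)$ for every $x\in X$. The same $p^\ast$ then witnesses $K(e(X)|e(x)) \le |p^\ast|+O(1)$ for each individual $x\in X$, and taking the maximum over $x$ yields the bound.

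For the harder direction, $EG_{\max}(X) \le k+O(\log k)$, I would stitch the (possibly distinct) individual witnesses $p_x$ (one for each $x \in X$, each of length $\le k$) into a single program $p$. The program consists of two parts: a self-delimiting encoding of the integer $k$ costing $O(\log k)$ bits, and a ``universal label'' $\ell(e(X))$ of length at most $k+O(\log k)$ bits that serves to identify $e(X)$ from any $e(x)$ with $x\in X$. On input $y$, $p$ reads $k$ and $\ell$, dovetails every program of length $\le k$ on $y$, enumerates the at most $2^{k+1}$ distinct candidate outputs, and halts with the unique candidate whose precomputed label equals $\ell$.

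The label $\ell$ is produced from a computable labeling of the r.e.\ bipartite graph $G_k=\{(y,z):K(z|y)\le k\}$, whose left-degree is bounded by $2^{k+1}$. The goal is to assign to each right vertex $z$ a label $\ell(z)\in\{0,1\}^{k+O(\log k)}$ so that the right-neighbors of any fixed left vertex $y$ carry pairwise distinct labels. Since $(e(x),e(X))\in G_k$ for every $x\in X$, the value $\ell(e(X))$ then uniquely pins down $e(X)$ among the candidates on any input $y=e(x)$. The main obstacle is this combinatorial construction: the right-degree of $G_k$ is in general unbounded, so a naive online greedy procedure can face arbitrarily many forbidden labels when committing to $\ell(z)$. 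Following the edge-coloring technique of \cite{BGLVZ98}, refined for the multiset case in \cite{Vi11}, the plan is to enumerate the edges of $G_k$ in a suitable length/time order and verify that a palette of $2^{k+O(\log k)}$ labels is always sufficient to maintain the distinctness requirement online. This palette size, together with the self-delimiting description of $k$, is precisely what contributes the $O(\log k)$ additive term in the statement.
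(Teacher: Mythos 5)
Your easy direction ($\max_{x\in X}K(e(X)|e(x))\le EG_{\max}(X)+O(1)$) is correct and is the same as the paper's. Your hard direction follows the paper's overall strategy---identify $e(X)$ among the at most $2^{k+1}$ outputs of programs of length at most $k$ run on input $e(x)$ by means of a short precomputed label---but the combinatorial object you defer to does not exist. You ask for a labeling of the full bipartite graph $G_k=\{(y,z):K(z|y)\le k\}$ by labels of length $k+O(\log k)$ such that the right-neighbors of every left vertex carry pairwise distinct labels. For any two right vertices $z\neq z'$, the left vertex $y=\langle z,z'\rangle$ satisfies $K(z|y)=O(1)$ and $K(z'|y)=O(1)$, so once $k$ exceeds a fixed constant \emph{every} pair of right vertices shares a left neighbor; hence all labels would have to be pairwise distinct, while $G_k$ has more than $2^{k+O(\log k)}$ (indeed infinitely many) right vertices. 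So the obstacle you flag is not that an online greedy procedure might run out of colors for want of a clever enumeration order: no finite palette suffices even offline, and the edge-coloring technique of the pairwise information-distance proof does not transfer, because there both degrees of the relevant graph are bounded by $2^{k+O(1)}$, whereas here the right-degree is what kills you.

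The repair, which is exactly what the paper's proof does, is to restrict the graph to the edges the decoder actually uses: pairs $(e(x),e(Y))$ with $x\in Y$ and $K(e(Y)|e(y))\le k$ for the members $y$ of $Y$. Then the right-degree is bounded by $n=|X|$, two right vertices conflict only if the corresponding sets share a member in the same position, and for a fixed $e(x)$ in position $j$ there are at most $2^k$ reachable sets with $x$ there; so the conflict graph has degree at most $n2^{k}$ and an online greedy vertex-coloring with about $n2^{k}$ colors succeeds (when a set is enumerated all of its members, hence all of its potential conflicts with previously enumerated sets, are visible at once, so labels can be committed safely). The decoder then needs, besides $k$ and the color, the position of $x$ inside the $n$-vector representing $X$, costing a further $O(\log n)$ bits that are absorbed into the logarithmic error term; this ingredient is absent from your accounting. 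With that restriction and the extra $\log n$ bits your argument becomes the paper's proof, which is itself an adaptation of the argument of Long, Zhu, Li and Ma.
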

%\begin{corollary}\label{cor.just}
%\rm
%To compute any $e(x_1)$ from any $e(x_2)$ with $x_1,x_2 \in X$
%can be done by a program of $\max_{x \in X} \{K(e(X)|e(x)\}+O(\log |X|)$ bits.
%Namely,  by the proof of Theorem~\ref{theo.just} there
%is a self-delimiting program of length $k+O(1)$ with input $e(x_2)$ and
%output $X$.
%The index of $x_1$ in $X$ requires at most $\log |X|$ bits given $X$.
%Using the search engine and the web we determine $e(x_1)$ from $x_1$.
%
%The lower bound on the length of a program as above is trivially
%$\max_{x \in X} \{K(e(X)|e(x)\}$.
%\end{corollary}

To obtain the NWD we must normalize $EG_{\max}$.
Let us give some intuition first.
Suppose $X,Y \in {\cal X}$ with $|X|,|Y| \geq 2$.
If the web events $e(x)$'s are more or less the same for all $x \in X$
then we consider the members of $X$ very similar to each other. 
If the web events $e(y)$'s are very different for different $y \in Y$
then we consider the members of $Y$ to be very different from one another.
Yet for certain such $X$ and $Y$ depending on the cardinalities of $X$ and $Y$
and the cardinalities of the web events of the members of $X$ and $Y$ 
we can have $EG_{\max}(X)=EG_{\max}(Y)$.
That is to say, the similarity is dependent on size.
Therefore, to express similarity of the elements in a set $X$ 
we need to normalize $EG_{\max}(X)$ using the cardinality of $X$ and
the events of its members. Expressing the normalized values
allows us to express the degree in which all elements
of a set are alike. Then we can compare truly different sets.

Use the symmetry of information law \eqref{eq.soi}
to rewrite $EG_{\max}(X)$ 
as $K(e(X)) - \min_{x \in X}\{K(e(x))\}$ up
to a logarithmic additive term which we ignore. 
Since $G(X)$ is computable prefix code 
for $e(X)$, while $K(e(X))$ is the shortest
computable prefix code for $e(X)$, 
it follows that $K(e(X)) \leq G(X)$. 
Similarly $K(e(x)) \leq G(x)$ for $x \in X$.
The search engine $G$ returns frequency $f(X)$ on query $X$ (respectively
frequency $f(x)$ on query $x$).
These frequencies are readily converted into $G(X)$ (respectively $G(x)$)
using \eqref{eq.g}. Replace
$K(e(X))$ by $G(X)$ and $\min_{x \in X}\{K(e(x))\}$ by
$\min_{x \in X}\{G(x)\}$ in $EG_{\max}(X)$. 
Subsequently use as normalizing term $\max_{x \in X}\{G(x)\} (|X|-1)$
which gives the best classification results in Section~\ref{sect.appl}
among several possibilities tried.
This yields the following.
\begin{definition}\label{def.NWD}
\rm
The {\em normalized web distance} (NWD) of $X \in {\cal X}$ 
with $G(X) < \infty$ (equivalently $f(X) > 0)$) is 
\begin{eqnarray}\label{eq.NWD}
NWD(X)& =&\frac{G(X) - \min_{x \in X}\{G(x)\}}
{\max_{x \in X}\{G(x)\}(|X|-1)}
\\&=&  \frac{\max_{x \in X}\{\log f(x)\}- \log f(X)}{
(\log N - \min_{x \in X}\{\log f(x)\})(|X|-1)},
\nonumber
\end{eqnarray}
otherwise $NWD(X)$ is undefined.
\end{definition}
The second equality in \eqref{eq.NWD}, 
expressing the NWD in terms of frequencies,
is seen as follows. We use \eqref{eq.g}.
The numerator is rewritten by 
$G(X)=\log 1/g(X)= \log (N/f(X))=\log N- \log f(X)$
and $\min_{x \in X}\{G(x)\}=
\min_{x \in X}\{ \log 1/g(x)\} = \log N- \max_{x \in X}\{ \log f(x)\}$.   
The denominator is rewritten as $\max_{x \in X}\{G(x)\} (|X|-1)
= \max_{x \in X}\{\log 1/g(x)\}(|X|-1)
= (\log N- \min_{x \in X}\{\log f(x)\})(|X|-1)$. 
%\begin{remark}
%\rm
%Without the normalization factor $|X|-1$ in \eqref{eq.NWD} the range
%of the thus modified NWD is $[0,\log_{|X|} N/|X|]$???????????????? 
%\end{remark}

\begin{example}\label{ex.1}
\rm
Although Google gives notoriously unreliable counts it serves
well enough for an illustration On our scale of similarity, if
$NWD(X)=0$ then the search terms in the set $X$ are identical, and if
$NWD(X)=1$ then the search terms in $X$ are as different as can be.
In October 2019 searching for ``Shakespeare'' gave   224,000,000 hits;
searching for ``Macbeth'' gave 52,200,000 hits; searching for ``Hamlet''
gave 110,000,000 hits; searching
for ``Shakespeare Macbeth'' gave  26,600,000 hits; searching for
``Shakespeare Hamlet'' gave 38,900,000 hits; and searching  
for ``Shakespeare Macbeth Hamlet'' gave 9,390,000 hits.
The number of web pages which can potentially be
returned by Google was estimated by searching for ``the'' as 
25,270,000,000. Using this number as $N$ we obtain
by \eqref{eq.NWD} the $NWD(\{Shakespeare, Macbeth\})
%\approx 0.151$,
\approx 0.34$, 
$NWD(\{Shakespeare, Hamlet\}) 
%\approx 0.165$, and
\approx 0.32$ and
$NWD(\{Shakespeare, Macbeth, Hamlet\}) 
\approx 0.26$.
%\approx 0.309$.
%409876356 has 
%Shakespeare 4098763 has 7 hits, Shakespeare Hamlet 4098763 has 5 hits
%4098763 has 229 hits.
%$NWD(\{Shakespeare,  Hamlet, 4098763\})
%NEW FORMULA:
We conclude that Shakespeare and Macbeth have a lot in common,
that Shakespeare and Hamlet have just a bit more in common,
and that taken together the terms Shakespeare, Hamlet, and Macbeth 
are even more similar. The ability to compute the NWD for multiple objects simultaneously,
taking a common measure of shared information across the entire query 
is a unique advantage of the proposed approach.
\end{example}

\begin{remark}
\rm
In Definition~\ref{def.NWD} it is assumed that $f(X)>0$
which, since it has integer values, means $f(X)\geq 1$. 
The case $f(X)=0$ means that there is an $x \in X$
such that $e(x) \bigcap e(X \setminus \{x\})= \emptyset$.
That is, query $x$ is independent of the set of queries $X\setminus \{x\}$,
$x$ has nothing in common with $X \setminus \{x\}$ since
there is no common web page. Hence the NWD is undefined.
The other extreme is that $e(x)=e(y)$ ($x \sim y$) for all $x,y \in X$.
In this case the $NWD(X)=0$.
\end{remark}

\section{Comparing NWD and NGD}\label{sect.compare}
The NGD (see Footnote~\ref{foot.1}) is a distance 
between two names.
The NWD is an extension of the NGD
to sets of names of finite cardinality. It is shown that the
NWD has far less computational complexity than the NGD. Moreover, the
NWD uses information to which the NGD is blind, that is, the common similarity
determined by the NWD is far better than that determined by the NGD.
Possibly each pair of objects has a particular relative semantics (NGD) but not necessarily 
the same relative semantics. Yet if this is always the same quantity of relative semantics we may 
conclude wrongly that the whole set of objects have a single semantics in common.
With the NWD we are certain that it pertains to a single common semantics.

\subsection{Computational Complexity}\label{compcompl}
The number of queries needed for using the NWD is usually much less 
than that using the NGD. 
\footnote{\label{foot.1}Defined in \cite[Eq. (6) in Section 3.4 ]{CV07} as 
\[NGD(x,y)=  \frac{ \max \{ \log f(x), \log f(y) \} - \log f(x,y)}
{ \log N - \min \{ \log f(x), \log f(y) \} }.
\]}
We ignore the cost of the arithmetic 
operations (which is larger anyway in
the NGD case) and of determining $N$
which has to be done in both cases.
There are two tasks we consider.

{\em Computing the common similarity of a set.}
The computational complexity of computing the common similarity
using the NGD with a set of $n$ terms  is as follows.
One has to use the search engine
on the data base to determine the search term frequencies. This requires
$n+{n \choose 2}$ frequency computations, namely the frequencies 
of the singletons and of the pairs.
To computational complexity of computing the common similarity of the
same set of $n$ terms by the NWD requires $n$ queries to determine
the singleton frequencies and 1 query to determine the frequency
of pages containing the entire set, that is, $n+1$ times computing frequencies.
Hence computational complexity using the NGD is much higher for large $n$
than that using the NWD.

{\em Classifying.} 
Let $n$ be the total number of elements divided over classes 
$A_1, \ldots , A_m$ of
cardinalities $n_1, \ldots , n_m$, respectively, with $\sum_{i=1}^m n_i = n$.
We classify a new item $x$ into one of the $m$ classes 
according to which class achieves the minimum common similarity (CS) difference
$CS(A \bigcup \{x\})-CS(A)$. If there are more than one such classes we select a
class of maximal CS. We compute the CS using the NGD or the NWD.
Using the NGD we require $n+\sum_{i=1}^m {n_i \choose 2}$ queries to determine
$CS(A_1), \ldots , CS(A_m)$. (Trivially
$\sum_{i=1}^m {n_i \choose 2} \leq {n \choose 2}$). To determine subsequently 
$CS(A_1 \bigcup \{x\}), \ldots ,
CS(A_m \bigcup \{x\})$ we require 1 query extra to determine $f(x)$
and $n$ queries extra to determine $f(x,y)$ for every item $y$ 
among the original
$n$ elements. Altogether there are $2n+1+\sum_{i=1}^m {n_i \choose 2}$
queries required using the NGD.

Using the NWD requires $\sum_{i=1}^m (n_i+1)=n+m$ queries to 
determine the NWD of $A_1, \ldots , A_m$. To subsequently determine the NWDs of
$A_1 \bigcup \{x\}, \ldots , A_m \bigcup \{x\}$ we extra require $f(x)$
and each of $f(\{y: y \in A_i\} \bigcup \{x\})$ for $1 \leq i \leq m$.
That is, $1+m$ queries. So in total $n+2m+1$ queries. 

To classify many new items we may consider training 
cost and testing cost.  
{\em Training cost} is to pre-compute all the queries required for classifying 
a new element---without the costs for the new element. This is only done once. 
{\em Testing cost} is how many queries are required for 
each new item that comes along.
Above we combined these two in the case of one new element.

The training cost for the NGD is up to $n+{n \choose 2}$. 
The testing cost for each new item is $n+1$. 

The training cost for the NWD is $n+m$. The testing cost for each new item
is $m+1$.

\subsection{Extracted Information}\label{sect.nwdvsngd}
Let $A,B$ be two sets of queries and $B \subset A$. Then the common similarity
of the queries in $A \setminus B$ may or may not agree with the common
similarity of the queries in $B$ but adding $A \setminus B$ to $B$
to obtain $A$ will not increase the common similarity of the queries in $A$
above that in $B$. Therefore the common similarity in $A$ is at most that
in $B$. This is generally followed by the NWD without the normalizing
factor $|X|-1$ in the denominator, see Lemma~\ref{claim.1}, except
in the pathological case when condition \eqref{eq.cond} does no hold.

Assume that $A=\{a_1, \ldots , a_n\}$ and $B=\{b_1, b_2\}$ with 
$b_1,b_2 \in A$. Then $NWD(A) \leq \min_{b_1,b_2 \in A} NWD(B)=
\min_{b_1,b_2 \in A} NGD(b_1,b_2)$. Only in this sense using the NGD to
determine the common similarity in a set $A$ gives an upper bound
on $NWD(A)$. All formulas using only NGD's use a subset of the $f(a_i)$'s
and the $f(a_i,a_j)$'s ($1 \leq i,j \leq n$). 
The NWD uses the $f(a_i)$'s and $f(a_1, \ldots , a_n)$. 
For given $f(a_i)$
and the $f(a_i,a_j)$ ($1 \leq i,j \leq n$) the values of 
$f(a_1, \ldots , a_n)$ can be any value in the 
interval $[0,\min_{b_1,b_2 \in A} NGD(b_1,b_2)]$. Hence 
the NWD can vary a lot (and therefore the common similarity) 
for most fixed values of the NGD's.

\begin{example}\label{ex.NGDvsNWD}
\rm
Firstly, we give an example where the common similarity computed from NGD's is
different from that computed by the NWD.
Let $f(x)=f(y)=f(z)= N^{1/4}$ be the cardinalities of the sets of 
web pages containing 
occurrences of the term $x$, the term $y$, and the term 
$z$, respectively. The quantity $N$ is the total number of web pages multiplied
by the appropriate constant $\alpha$ as in Section~\ref{sect.gc}. Let further, 
$f(x,y)=f(x,z)=f(y,z)=N^{1/8}$ and $f(x,y,z)=N^{1/16}$. Here $f(x,y)$ is the 
number of pages containing both terms $x$ and $y$, and so on. Computing
the NGD's gives $NGD(x,y) = NGD(x,z)=NGD(y,z) =1/6$. 
Using for the set $\{x,y,z\}$
either the minimum NGD, the maximum NGD, or the average NGD, will always
give the value $1/6$.
Using the NWD as in \eqref{eq.NWD} we find $NWD(\{x,y,z\})=1/8$.  
This shows that  in this example the common similarity determined using the NGD
is smaller than the common similarity determined using the NWD. 
(Recall that the common 
similarity is 0 if it is maximal and 1 if it is minimal.)

Secondly, we give an example of a difference in classification between 
the NGD and the NWD. The class is selected where the absolute difference in 
common similarity with and without inserting the new item is minimal.
If  more than one class is selected we choose a class with maximal
common similarity. The frequencies of $x,y,z$ and the pairs $(x,y),(x,z),(y,z)$
are as above. For the terms $u,v$ and the pairs $(u,v),(u,z),(v,z)$
the frequencies are $f(u)=f(v)=N^{1/4}$ and 
$f(u,v)=f(u,z)=f(v,z)=N^{1/9}$. 
Suppose we classify the term $z$ into classes $A=\{x,y\}$ and $B=\{u,v\}$ 
using a computation with the NGD's. Then the class $B$ will be selected. 
Namely, the insertion of $z$ in class $A$ will induce new NGD's 
with all exactly having the values of $1/6$ (as above). 
Since $NGD(u,v) = NGD(u,z)= NGD(v,z)=5/36$
insertion of $z$ into the class $B=\{u,v\}$ will
give the NGD's of all resulting pairs $(u,v), (u,z), (v,z)$  values of $5/36$.
The choice being between classes $A$
and $B$ we see that in neither class the common similarity
according to the NGD's is changed. Therefore we select the class where all
NGD's are least (that is, the most common similarity) 
which is $B=\{u,v\}$. Next we select according to the NWD. 
Assume $f(u,v,z)=N^{1/10}$. Then $NWD(u,v,z)=1/4$. 
Then $NWD(\{u,v,z\})-NWD(\{u,v\})(=NGD(u,v)) = 1/4-5/36 =4/36$.
Since $NWD(\{x,y,z\})-NWD(\{x,y\})(=NGD(x,y))=1/8-1/6=-1/24$ 
and selection according to the NWD chooses the least absolute difference
we select class $A=\{x,y\}$. 
\end{example}

\section{Theory}\label{sect.theory}
Let $X=\{x,y\} \in {\cal X}$. 
The NGD distance between $x$ and $y$ in Footnote~\ref{foot.1} equals
$NWD(X)$ up to a constant. 

{\em Range} First we consider the range of the NWD.
For sets of cardinality greater or equal to two the following holds.
\begin{lemma}\label{lem.0}
Let $X \in {\cal X} \setminus S$ and $N > |X|$.
Then $NWD(X) \in [0, (\log_{|X|} (N/|X|)) /(|X|-1)]$.
\end{lemma}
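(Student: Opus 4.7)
The plan is to split $NWD(X) \in [0,1]$ into the two endpoint inequalities and reduce each to manipulations of the defining identity $e(X) = \bigcap_{x \in X} e(x)$. That inclusion gives $f(X) \leq f(x)$ for every $x \in X$, whence $G(X) \geq G(x) \geq \min_{y \in X} G(y)$, so the numerator of \eqref{eq.NWD} is non-negative. The denominator is non-negative because $g(x) \leq 1$ forces $G(x) \geq 0$ and because $|X| \geq 2$ (since $X \notin S$); it is strictly positive whenever $NWD(X)$ is defined, which rules out the degenerate case where every $f(x) = N$. So the lower bound is essentially immediate from set-theoretic monotonicity of the web event.

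For the upper bound I would multiply through and substitute $G(y) = \log N - \log f(y)$ to rewrite $NWD(X) \leq 1$ as the equivalent frequency inequality
\[
\max_{x \in X} f(x) \cdot \bigl(\min_{x \in X} f(x)\bigr)^{|X|-1} \;\leq\; f(X) \cdot N^{|X|-1},
\]
or in logarithmic form $G(X) \leq \min_{x \in X} G(x) + (|X|-1) \max_{x \in X} G(x)$. The natural route is an encoding-style argument: let $x_0$ attain $\min_x G(x)$ and specify $e(X)$ by first describing $e(x_0)$ at cost $G(x_0)$, then successively restricting along the chain $e(x_0) \supseteq e(x_0) \cap e(x_1) \supseteq \cdots \supseteq e(X)$ as one intersects in the remaining $|X|-1$ elements of $X$, paying at most $\max_x G(x)$ bits per restriction step and summing.

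The main obstacle is bounding the per-step cost of the restriction. Unlike the Kolmogorov-complexity setting, where a conditional complexity $K(A \mid B)$ delivers such a bound automatically up to an $O(\log)$ term, here one is working with the explicit prefix code $G$, and the incremental cost of intersecting $e(y)$ into the running set has to be teased out of $G(y) = \log(N/f(y))$ together with the inclusions $e(X) \subseteq e(y) \subseteq \Omega$. I expect the delicate step to be showing that the telescoping increments of $\log f(\cdot)$ along the chain from $e(x_0)$ to $e(X)$ sum to at most $(|X|-1) \max_x G(x)$; the identity $\log(f(x_0)/f(X)) = G(X) - G(x_0)$ and the normaliser $N$ do the arithmetic work, and the standing assumption $f(X) \geq 1$ (i.e.\ $NWD(X)$ is defined) is what ultimately closes the estimate.
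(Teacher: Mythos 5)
Your lower-bound argument is correct and is essentially the paper's: $e(X)=\bigcap_{x\in X}e(x)$ gives $f(X)\le f(x)$ for all $x\in X$, so the numerator of \eqref{eq.NWD} is nonnegative, and the denominator is too. The problem is the upper bound, which you reduce (correctly) to the frequency inequality $\max_{x\in X}f(x)\cdot(\min_{x\in X}f(x))^{|X|-1}\le f(X)\cdot N^{|X|-1}$ but then do not prove: you describe a chain-of-intersections encoding argument and explicitly flag its key step (that each intersection step costs at most $\max_{x\in X}G(x)$ bits) as an unresolved obstacle. That step cannot be closed by any purely combinatorial argument about the events, because the target inequality is false for arbitrary events. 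Take $|X|=2$, $\alpha=1$ (so $N=|\Omega|$), and $e(x),e(y)\subseteq\Omega$ with $|e(x)|=|e(y)|=N/2$ and $|e(x)\cap e(y)|=1$; then $f(X)=1$, the inequality reads $N^{2}/4\le N$, which fails for $N>4$, and indeed $NWD(X)=(\log_2(N/2)-0)/(\log_2 N-\log_2(N/2))=\log_2 N-1>1$. The standing assumption $f(X)\ge 1$, which you identify as what ``ultimately closes the estimate,'' is therefore not enough.

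The ingredient you are missing is the one the paper's proof quietly supplies: a near-independence property of the web events, namely $g(X)\ge\prod_{x\in X}(g(x)-1/N)+1/N$, asserted with equality in the extremal case $|\bigcap_{x\in X}e(x)|=1$. From this one gets $G(X)\le\sum_{x\in X}G(x)$ up to a small error, hence $G(X)-\min_{x\in X}G(x)\le\sum_{x\ne x_0}G(x)\le(|X|-1)\max_{x\in X}G(x)$, which is exactly the denominator of \eqref{eq.NWD}; that is the entire content of the paper's upper-bound proof. (Whether that product formula is a theorem or a modelling assumption about how search terms distribute over web pages is a separate issue --- it does not follow from $|\bigcap_{x\in X}e(x)|=1$ alone, as the counterexample above shows --- but it, or something equivalent, is what your telescoping chain would need in order to bound the per-step cost, and your write-up neither states nor proves it.) So: lower bound fine and identical to the paper's; upper bound has a genuine gap, and the route you sketch cannot be completed without importing an independence-type hypothesis on the web events.
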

(In practice the range is from 0 to 1; the higher values are theoretically 
possible but seem not to occur in real situations.)

{\em Change for Supersets} We next determine bounds on how the NWD may change under addition of members to
its argument. These bounds are necessary loose since the added members
may be similar to existing ones or very different.
In Lemma~\ref{claim.1} below we shall distinguish two cases related
to the minimum frequencies.
% that have an effect
%on relative size of $NWD(X)$ and $NWD(Y)$. 
The second case divides into two subcases depending on whether the 
Equation \ref{eq.cond} below holds or not:
\begin{equation}\label{eq.cond}
\frac{f(y_1)f(X)}{f(x_1)f(Y)} \geq
\left(\frac{f(x_0)}{f(y_0)}\right)^{(|X|-1)NWD(X)},
\end{equation}
where $x_0= \arg \min_{x \in X}\{\log f(x)\}$,
$y_0=  \arg \min_{y \in Y} \{\log f(y)\}$,
$x_1= \arg \max_{x \in X}\{\log f(x)\}$, and
$y_1=  \arg \max_{y \in Y} \{\log f(y)\}$.

\begin{example}
\rm
Let $|X|=5$, $f(x_0)=1,100,000$, $f(y_0)=1,000,000$, 
$f(x_1)=f(y_1)=2,000,000$, $f(X)=500$, $f(Y)=100$, and $NWD(X)=0.5$. 
The righthand side of the inequality 
\eqref{eq.cond} is $1.1^2=1.21$ while the lefthand side is 
$5$. Therefore \eqref{eq.cond} holds.
It is also possible that inequality \eqref{eq.cond} does not hold, that is,
it holds with the $\geq$ sign replaced by the $<$ sign.
We give an example. Let $|X|=5$, 
$f(x_0)=1,100,000$, $f(y_0)=1,000,000$, $f(x_1)=f(y_1)=2,000,000$,
$f(X)=110$, $f(Y)=100$, and $NWD(X)=0.5$.
The righthand side of the inequality
\eqref{eq.cond} with $\geq$ replaced by $<$ is $1.1^2=1.21$ 
while the lefthand side is $1.1$. 
\end{example}
 
\begin{lemma}\label{claim.1}
Let $X,Z \subseteq Y$, $X,Y,Z \in {\cal X} \setminus S$, and 
$\min_{z \in Z} \{f(z)\} = \min_{y \in Y} \{f(y)\}$.

{\rm (i)} If $f(y) \geq \min_{x \in X} \{f(x)\}$ for
all $y \in Y$ then $(|X|-1)NWD(X) \leq (|Y|-1)NWD(Y)$. 
%\[
%\frac{|X|-1}{|Y|-1}NWD(X) \leq NWD(Y).
%\]
{\rm (ii)} Let $f(y) < \min_{x \in X} \{f(x)\}$ for
some $y \in Y$. If \eqref{eq.cond} holds then 
$(|X|-1)NWD(X) \leq (|Y|-1)NWD(Y)$. If \eqref{eq.cond} does not hold then 
$(|X|-1)NWD(X) > (|Y|-1)NWD(Y) \geq (|Z|-1)NWD(Z)$. 
%\begin{align*}
%&\mbox{\it then}\;\; \frac{|X|-1}{|Y|-1} NWD(X) \leq NWD(Y) \;\; \mbox{\it else} 
%\\&\frac{|X|-1}{|Y|-1}NWD(X) > NWD(Y) \geq \frac{|Z|-1}{|Y|-1}NWD(Z).
%\end{align*}
\end{lemma}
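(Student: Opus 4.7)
The plan is to rewrite everything in terms of log-frequencies so that $(|X|-1)NWD(X)$ becomes a single ratio. Setting
\begin{align*}
A &= \log f(y_1)-\log f(Y), & B &= \log N-\log f(y_0),\\
C &= \log f(x_1)-\log f(X), & D &= \log N-\log f(x_0),
\end{align*}
the definition of the NWD in \eqref{eq.NWD} gives $(|X|-1)NWD(X)=C/D$ and $(|Y|-1)NWD(Y)=A/B$. The denominators $B$ and $D$ are strictly positive because $N$ dominates every individual frequency, and the inclusion $X\subseteq Y$ forces $f(y_1)\geq f(x_1)$ (maximum over a superset) and $f(Y)\leq f(X)$ (more constraints shrink the intersection defining the web event), hence $A\geq C\geq 0$.

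First I would dispose of part (i). The hypothesis that every $f(y)$ is at least $\min_{x\in X}\{f(x)\}$, combined with $x_0\in X\subseteq Y$, forces $f(y_0)=f(x_0)$ and therefore $B=D$. Then $A\geq C$ and $B=D>0$ immediately yield $A/B\geq C/D$, which is the required inequality.

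For part (ii) the crucial step, and the one requiring the most care, is the reduction of the hypothesis \eqref{eq.cond} to the desired comparison by cross-multiplication. Taking base-$2$ logarithms of \eqref{eq.cond} yields exactly $A-C\geq (C/D)(B-D)$, which after clearing $D$ and cancelling $-CD$ from both sides becomes $AD\geq CB$, equivalently $A/B\geq C/D$. In case (ii) one has $f(y_0)<f(x_0)$, so $B>D>0$ and all sign manipulations are legitimate; reversing \eqref{eq.cond} strictly reverses the conclusion. Thus \eqref{eq.cond} is equivalent to $(|X|-1)NWD(X)\leq (|Y|-1)NWD(Y)$, which settles the two sub-claims of (ii).

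The remaining chain inequality $(|Y|-1)NWD(Y)\geq (|Z|-1)NWD(Z)$ is the easy tail. The hypothesis $\min_{z\in Z}\{f(z)\}=\min_{y\in Y}\{f(y)\}$ equates the two denominators, while $Z\subseteq Y$ gives $f(z_1)\leq f(y_1)$ and $f(Z)\geq f(Y)$ by the same monotonicity used above, so the numerator of $(|Y|-1)NWD(Y)$ dominates that of $(|Z|-1)NWD(Z)$. The only real obstacle in the whole argument is bookkeeping the direction of the cross-multiplication in case (ii); once $B>D>0$ is confirmed there, the equivalence with \eqref{eq.cond} is automatic and the lemma falls out.
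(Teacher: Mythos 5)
Your proof is correct and follows essentially the same route as the paper: reduce to the log-frequency ratios, settle (i) and the $Y$-versus-$Z$ tail by equality of denominators and monotonicity of the numerators, and recognize the logarithm of \eqref{eq.cond} as exactly the comparison $A-C\geq (B-D)\,C/D$. The only (cosmetic) difference is that you finish by direct cross-multiplication to $AD\geq CB$, whereas the paper invokes the mediant inequality $(a+c)/(b+d)\geq a/b$ when $c/d\geq a/b$; both yield the same equivalence.
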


\begin{example}
\rm
Consider the Shakespeare--Macbeth--Hamlet Example~\ref{ex.1}.
Let $X =\{Shakespeare, Macbeth\}$, $Y=\{Shakespeare, Macbeth, Hamlet\}$,
and $Z=\{Shakespeare, Hamlet\}$. Then inequality \eqref{eq.cond} 
for $X$ versus $Y$
gives $(124,000,000 \times 7,730,000/(124,000,000 \times 663,000) 
\geq  (22,400,000/22,400,000)^{0.395}$ (that is
$11.659 \geq  1$), and 
for $Z$ versus $Y$ gives
$18,500,000/663,000 \geq (51,300,000/22,400,000)^{0.306}$ (that is
$27.903  \geq  1.289$). In  the first case Lemma~\ref{claim.1} item (i)
is applicable since the frequency minima of $X$ and $Y$ are the same.
(In this case inequality \eqref{eq.cond} is not needed.)
Therefore $NWD(X)(|X|-1)/(|Y|-1) \leq NWD(Y)$ which works out as
$0.395/2 \leq 0.372$. 
In the second case Lemma~\ref{claim.1} item (ii)
is applicable since the frequency minima of $Z$ and $Y$ are not
the same. Since inequality \eqref{eq.cond} holds this gives 
$NWD(Z)(|Z|-1)/(|Y|-1) \leq NWD(Y)$ which works out as
$0.306/2 \leq 0.372$. 
\end{example}

\begin{remark}
\rm
To interpret Lemma~\ref{claim.1} we give the following intuition.
Under addition of a member to a set there are two opposing
tendencies on the NWD concerned.
First, the range of the NWD decreases by Lemma~\ref{lem.0} and
the definition \eqref{eq.NWD} of the NWD 
shows that addition of a member tends to decrease the value of the NWD, 
that is, it moves closer to 0. 
Second, the common similarity and hence the similarity 
of queries in a given set as measured
by the NWD is based
on the number of properties all members of a set have in common.
By adding a member to the set clearly the number of common properties
does not increase and generally decreases. 
This diminishing tends to cause the NWD to possibly increase---move
closer to the maximum value of the range of the new set (which is smaller
than that of the old set). 
The first effect may become visible
when $(|X|-1)NWD(X)> (|Y|-1) NWD(Y)$, which happens
in the case of Lemma~\ref{claim.1} item (ii) for the case when the frequencies 
do not satisfy \eqref{eq.nwdxy}.
The second effect may become visible when $(|X|-1)NWD(X) \leq (|Y|-1)NWD(Y)$,
which happens in Lemma~\ref{claim.1} item (i), 
and item (ii) with the frequencies
satisfying \eqref{eq.nwdxy}.
%(Note that to keep $NWD(X) \in [0,1]$ for all $X$ we have the factor
%$(|X|-1)$ in the denominator of $NWD(X)$. Without this factor the 
%resulting function of $X$ has range $[0,|X|-1]$ and in the
%inequalities in this remark and in the NWD formula \eqref{eq.NWD}
%and all the previous theory properties 
%the factors $|X|-1$ and $|Y|-1$ are replaced by 1.) 
\end{remark}

{\em Metricity} For every set $X$ we have that the $NWD(X)$ is invariant under 
permutation of $X$: it is {\em symmetric}. 
The NWD is also {\em positive definite} as 
in Appendix~\ref{sect.metric} (where equal members should be interpreted
as saying that the set has only one member).
However the NWD does {\em not satisfy the triangle inequality} and hence is
not a metric. This is natural for a common similarity or semantics:
The members of a set $XY$ (shorthand for $X \bigcup Y$)
can be less similar (have greater NWD) 
then the similarity
of the members of $XZ$ plus the similarity of the members of $ZY$
for some set $Z$. 
\begin{lemma}\label{theo.triangle}
The NWD violates the triangle inequality.
\end{lemma}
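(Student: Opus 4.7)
The plan is to produce a concrete counterexample on doubletons, exploiting the fact that on pairs the NWD reduces to an ordinary distance, so the triangle inequality takes the familiar form $NWD(\{x,y\}) \leq NWD(\{x,z\}) + NWD(\{z,y\})$. The intuition is to make $x$ and $y$ nearly disjoint on the web while both have substantial overlap with a third term $z$: then $\{x,y\}$ should look very dissimilar even though $\{x,z\}$ and $\{z,y\}$ look quite similar.

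Concretely I would set $f(x) = f(y) = f(z) = a$, $f(\{x,z\}) = f(\{z,y\}) = b$, and $f(\{x,y\}) = c$, with $1 \leq c \ll b \leq a \ll N$, and apply the frequency form of \eqref{eq.NWD}. This gives
\begin{equation*}
NWD(\{x,y\}) = \frac{\log a - \log c}{\log N - \log a}, \qquad
NWD(\{x,z\}) = NWD(\{z,y\}) = \frac{\log a - \log b}{\log N - \log a}.
\end{equation*}
The triangle inequality fails precisely when $\log a - \log c > 2(\log a - \log b)$, i.e., when $b^2 > ac$. For a clean illustration I would take $N = 10^{10}$, $a = 10^4$, $b = 10^3$, $c = 1$, yielding $NWD(\{x,y\}) = 4/6 = 2/3$ and $NWD(\{x,z\}) + NWD(\{z,y\}) = 2/6 = 1/3$, a clear violation.

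The remaining step is to confirm the configuration is realizable, so that the example is not vacuous: with $|\Omega|$ of order $10^{10}$, one can partition disjoint blocks of web pages corresponding to $e(x) \setminus (e(y) \cup e(z))$, $e(y) \setminus (e(x) \cup e(z))$, $e(z) \setminus (e(x) \cup e(y))$, $e(x) \cap e(z) \setminus e(y)$, $e(z) \cap e(y) \setminus e(x)$, and $e(x) \cap e(y) \cap e(z)$ of the required sizes, and the counts sum to well under $|\Omega|$. The only subtle point, and hence the main obstacle, is arranging the parameters so that each of the three pair-NWDs stays in $[0,1]$ as guaranteed by Lemma~\ref{lem.0} while simultaneously satisfying $b^2 > ac$; this is handled by keeping $a$ well below $\sqrt{N}$, which makes the common denominator $\log N - \log a$ large enough that each pair-NWD is comfortably below one, as the numerical choice above witnesses.
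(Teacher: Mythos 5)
Your proof is correct and takes essentially the same route as the paper: an explicit counterexample with hand-picked frequencies, checked to be realizable and to violate $NWD(XY)\le NWD(XZ)+NWD(ZY)$. The only difference is that the paper takes $Z$ to be a two-element set (so $XZ$ and $ZY$ are triples and acquire the factor $|X|-1=2$ in the denominator), whereas you stay entirely with pairs, where the NWD reduces to the classical pairwise distance; both are valid instances of the triangle inequality as defined in Appendix~\ref{sect.metric}, and your derivation of the violation condition $b^2>ac$ plus the realizability check is, if anything, more explicit than the paper's.
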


{\em Similarity Explained} 
It remains to formally prove that the NWD expresses in the similarity
of the search terms in the set. 
We define the notion of a distance on these sets using
the web as side-information. For a set $X$ a distance (or diameter) of $X$
is denoted by $d(X)$.
We consider only distances  that are 
upper semicomputable, that is,
the distance can be computably approximated
from above (Appendix~\ref{sect.comp}).
A priori we allow asymmetric distances, but we
exclude degenerate distances such as $d(X)=1/2$
for all $X \in {\cal X}$ containing a fixed element $x$. That is,
for every $d$ we want only finitely many
sets $X \ni x$ such that $d(X) \leq d$. Exactly how fast we want the
number of sets we admit to go to $\infty$ is not important; it is only a
matter of scaling. 
\begin{definition}\label{def.wdf}
\rm
A {\em web distance function} (quantifying the  common properties or
common features)
$d: {\cal X} \rightarrow {\cal R}^+$ is {\em admissible} if $d(X)$ is
(i) a nonnegative total real function and is 0 iff $X \in S$;
(ii) it is upper semicomputable from the $e(x)$'s with $x \in X$
and $e(X)$; and (iii) it satisfies 
the density requirement: for every $x \in S$ 
\[
\sum_{X \ni x, \; |X| \geq 2} 2^{-d(X)} \leq 1.
\] 
\end{definition}
We give the gist of what we are about to prove. 
Let $X=\{x_1,x_2, \ldots, x_n\}$. A feature of a query is a property
of the web event of that query. For example, 
the frequency in the web event of web pages containing an occurrence 
of the word ``red.'' We can compute 
this frequency for each $e(x_i)$ ($1 \leq i \leq n$). 
The minimum of those frequencies is the maximum of the number
of web pages containing the word ``red'' which surely
is contained in each web event $e(x_1), \ldots , e(x_n)$.
One can identify this maximum with the inverse of a distance in $X$.
There are many such distances in $X$.
The shorter a web distance is, the more dominant is the feature
it represents.
We show that the minimum admissible distance is $EG_{\max}(X)$.
It is the least admissible web distance and 
represents the shortest of all admissible web distances
in members of $X$. Hence the closer the numerator
of $NWD(X)$ is to $EG_{\max}(X)$ the better it represents the 
dominant feature all members of $X$ have in common.  

\begin{theorem}\label{theo.ideal}
Let $X \in {\cal X}$. The function $G(X)-\min_{x \in X}\{G(x)\}$ 
is a computable upper bound on 
$EG_{\max}(X)$. The closer it is to 
$EG_{\max}(X)$, the better it approximates the shortest admissible
distance in $X$. The normalized form of $EG_{\max}(X)$ is $NWD(X)$. 
\end{theorem}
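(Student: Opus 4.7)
The plan is to break the theorem into its three asserted sub-claims and address them in order. For the first sub-claim (that $G(X)-\min_{x\in X}G(x)$ is a computable upper bound on $EG_{\max}(X)$): computability is immediate from \eqref{eq.g}, since $G(X)=\log N-\log f(X)$ is determined by the search-engine frequencies $f$, which are assumed reliably returned. For the upper bound I would apply the symmetry of information law \eqref{eq.soi} to Theorem~\ref{theo.just} to rewrite $EG_{\max}(X)=K(e(X))-\min_{x\in X}K(e(x))$ up to a logarithmic additive term, and then use $K(e(X))\le G(X)+O(1)$ and $K(e(x))\le G(x)+O(1)$, which hold because $G(\cdot)$ is a valid computable prefix code for each web event while $K$ is the shortest prefix code. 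Substituting $G$ for $K$ term by term, as anticipated in the paragraph preceding Definition~\ref{def.NWD}, yields the claimed bound.

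For the second sub-claim (that the closer $G(X)-\min_xG(x)$ lies to $EG_{\max}(X)$, the better it approximates the shortest admissible distance in $X$), the main work is to prove that $EG_{\max}$ is itself the minimum admissible web distance on ${\cal X}$ in the sense of Definition~\ref{def.wdf}, up to an additive logarithmic term. Admissibility breaks into the three clauses of that definition: nonnegativity and vanishing on singletons are built in once $EG_{\max}$ is extended to $0$ on $S$; upper semicomputability is exactly Lemma~\ref{lem.egmax}; and the density requirement
\[
\sum_{X\ni x,\;|X|\ge 2}2^{-EG_{\max}(X)}\le 1
\]
follows because Theorem~\ref{theo.just} gives $EG_{\max}(X)\ge K(e(X)\mid e(x))$ for that particular $x\in X$, while the conditional prefix codelengths $K(\,\cdot\mid e(x))$ satisfy Kraft's inequality \eqref{eq.kraft}.

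For minimality I would apply the standard coding-theorem construction: for any admissible $d$ and fixed $x\in S$, the density requirement $\sum_{X\ni x}2^{-d(X)}\le 1$ together with upper semicomputability of $d$ from $e(x)$ and $e(X)$ produces, by a Kraft--Chaitin style argument, a conditional prefix code for $X$ given $e(x)$ of length $d(X)+O(\log d(X))$. Since $e(X)$ is computable from $X$, this yields $K(e(X)\mid e(x))\le d(X)+O(\log d(X))$ for every $x\in X$, and therefore $EG_{\max}(X)=\max_{x}K(e(X)\mid e(x))\le d(X)+O(\log d(X))$.

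The third sub-claim (that the normalized form of $EG_{\max}(X)$ is $NWD(X)$) is essentially the derivation in the paragraph immediately preceding Definition~\ref{def.NWD}: substitute $G(X)$ and $\min_xG(x)$ for $K(e(X))$ and $\min_xK(e(x))$ in $EG_{\max}(X)$, and then divide by the scale factor $\max_x G(x)(|X|-1)$, which is chosen so that the ratio lies in $[0,1]$ (confirmed by Lemma~\ref{lem.0}). The step I expect to be the main obstacle is the minimality construction above: converting the density requirement on an arbitrary admissible $d$ into a conditional prefix code that is effectively constructible from $e(x)$ alone. This parallels the classical proof that $K$ is minimal (up to a constant) among upper semicomputable codelength functions satisfying Kraft's inequality, and the delicate point is verifying that the upper-semicomputability-from-$e(x)$-and-$e(X)$ clause of Definition~\ref{def.wdf} suffices to carry the enumeration through.
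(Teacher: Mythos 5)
Your overall architecture matches the paper's: the heart of the argument is showing that $EG_{\max}$ is an admissible web distance in the sense of Definition~\ref{def.wdf} and that it minorizes every computable admissible distance, and your minimality step is essentially the paper's Claim~\ref{claim.wdf} --- the paper routes it through the universal distribution ${\bf m}(X|x)$ and the coding theorem $-\log {\bf m}(X|x)=K(X|x)+O(1)$ rather than an explicit Kraft--Chaitin construction, but these are two phrasings of the same argument. Your density check for $EG_{\max}$ via $EG_{\max}(X)\geq K(e(X)|e(x))$ and Kraft's inequality for the conditional complexities is a legitimate variant of the paper's observation that $\{EG_{\max}(X): X\ni x\}$ is the length set of a prefix code. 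Your treatment of the third sub-claim coincides with the paper's.

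The genuine problem is your first sub-claim. You propose to get the upper bound by writing $EG_{\max}(X)=K(e(X))-\min_{x\in X}K(e(x))$ and substituting $G$ for $K$ term by term using $K(e(X))\leq G(X)$ and $K(e(x))\leq G(x)$. That substitution goes the wrong way in the subtracted term: $K(e(x))\leq G(x)$ for all $x$ gives $\min_{x\in X}K(e(x))\leq\min_{x\in X}G(x)$, hence $-\min_{x\in X}K(e(x))\geq -\min_{x\in X}G(x)$, so these inequalities alone do not bound $K(e(X))-\min_{x\in X}K(e(x))$ above by $G(X)-\min_{x\in X}G(x)$; replacing a subtracted quantity by something larger can only decrease the difference. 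The paragraph preceding Definition~\ref{def.NWD} performs this replacement only as a heuristic motivation for the formula, not as a proof of an inequality. The paper instead obtains the upper bound as a corollary of the very minorization you prove in your second part: it asserts that $G(X)-\min_{x\in X}\{G(x)\}$ is itself a computable admissible web distance function, so Claim~\ref{claim.wdf} applies to it and yields $EG_{\max}(X)\leq G(X)-\min_{x\in X}\{G(x)\}$ up to the usual additive term. You already have all the machinery needed for this; the repair is to verify admissibility of $G(X)-\min_{x\in X}\{G(x)\}$ and invoke minorization, rather than arguing term by term from $K\leq G$.
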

The normalized least admissible distance in a set is the 
least admissible distance between its members which we call the common
admissible similarity. Therefore we have:
\begin{corollary}\label{cor.ideal}
\rm
The function $NWD(X)$ is the
common admissible similarity among all search terms in $X$.
This admissible similarity can be viewed as
semantics that all search terms in $X$ have in common. 
\end{corollary}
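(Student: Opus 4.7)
The corollary is essentially the interpretive distillation of Theorem~\ref{theo.ideal} together with Lemma~\ref{lem.0}, so my plan is to argue it in three short steps rather than via fresh calculation. First, I invoke Theorem~\ref{theo.ideal}: among all web distances $d$ meeting Definition~\ref{def.wdf}, the quantity $EG_{\max}(X)$ is the pointwise minimum (up to the logarithmic additive term from Theorem~\ref{theo.just}). Each admissible $d$ encodes, via its upper semicomputability and the density requirement $\sum_{X\ni x,\;|X|\geq 2} 2^{-d(X)}\leq 1$, a feature or property shared by every member of $X$: the smaller the value $d(X)$, the more dominant the corresponding shared feature. Consequently $EG_{\max}(X)$ records the single most dominant such common feature of the search terms in $X$.

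Next, I pass from the raw information distance $EG_{\max}(X)$ to its normalized form. By the rewriting preceding Definition~\ref{def.NWD} (using the symmetry of information), $EG_{\max}(X)$ agrees up to a logarithmic term with $K(e(X))-\min_{x\in X}K(e(x))$, whose computable analogue is the numerator $G(X)-\min_{x\in X}G(x)$ of $NWD(X)$. Dividing by $\max_{x\in X}G(x)\,(|X|-1)$ and appealing to Lemma~\ref{lem.0} forces $NWD(X)$ into the unit interval, producing a scale on which sets of differing cardinality and differing web-event sizes can actually be compared. Because this normalization is monotone in the admissible numerator, it preserves the ordering by ``dominance of shared feature'' established in the previous step.

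Combining these, $NWD(X)$ is the computable upper-bound approximation, normalized to $[0,1]$, of the least admissible web distance in $X$, which I can thus legitimately call the common admissible similarity of the search terms in $X$. Because the admissible distances of Definition~\ref{def.wdf} exhaust precisely the upper-semicomputable, density-constrained features that all members of $X$ can share, quantifying the minimum one is quantifying the strongest common meaning detectable from the web; this is the ``common semantics'' claimed in the corollary. The main obstacle is not technical but interpretive: I need to justify that the density condition in Definition~\ref{def.wdf} is the right formalization of ``shared feature,'' and that the specific normalizing factor $\max_{x\in X}G(x)\,(|X|-1)$ is the scaling that converts the minimum admissible distance into a genuine similarity rather than distorting which feature dominates. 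Both points are already underwritten by Lemma~\ref{lem.0} and by Theorem~\ref{theo.ideal}, so the corollary follows.
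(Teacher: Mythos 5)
Your proposal is correct and follows essentially the same route as the paper, which gives no separate proof of Corollary~\ref{cor.ideal}: it derives the corollary directly from Theorem~\ref{theo.ideal} (whose internal Claim establishes that $EG_{\max}$ minorizes every admissible web distance) together with the definitional identification of the normalized least admissible distance with the ``common admissible similarity.'' Your added remarks on monotonicity of the normalization and on the interpretive status of the density condition are consistent with, and slightly more explicit than, the paper's own one-line ``therefore.''
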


\section{Applications}\label{sect.appl}
\subsection{Methodology}\label{sect.metho}
The approach presented here requires the ability to query a database for the number of occurrences and co-occurrences of the elements in the set that we wish to analyze. One challenge is to find a database that has sufficient breadth to contain a meaningful numbers of co-occurrences for related terms. As discussed previously, an example of one such database is the World Wide Web, with the page counts returned by Google search queries used as an estimate of co-occurrence frequency. There are two issues with using Google search page counts. The first issue is that Google limits the number of programmatic searches   in a single day to a maximum of 100 queries, and charges for queries in excess of 100 at a rate of up to \$50 per thousand. 
The second issue with using Google web search page counts is that the numbers are not exact, but are generated using an approximate algorithm that Google has not disclosed. For the questions considered previously \cite{CV07} we found that these approximate measures were sufficient at that time to generate useful answers, especially in the absence of any a priori domain knowledge. It is possible to implement internet based searches without using search engine API's, and therefore not subject to daily limit. This can be accomplished by parsing the HTML returned by the search engine directly. The issue with google page counts in this study being approximate counts based on a non-public algorithm was more concerning as changes in the approximation algorithm can influence page count results in a way that may not reflect true changes to the underlying distributions.  Since any internet search that returns a results count can be used in computing the NWD, we adopt the approach of using web sites that return exact rather than approximate page counts for a given query.

Here we describe a comparison of the NWD using the set formulation based on web-site search result counts with the pairwise NWD formulation. The examples  are based on search results from Amazon, Wikipedia and the National Center for Biotechnology Information (NCBI) website from the U.S. National Institutes of Health. The NCBI website exposes all of the NIH databases searchable from  a single web portal. We consider example classification questions that involve partitioning a set of words into underlying categories. For the NCBI applications we compare various diseases using the loci identified by large genome wide association studies (GWAS). For the NWD set classification, we  determine whether to assign element $x$ to class $A$ or class $B$ (both classes pre-existing) by computing 
$NWD(Ax)-NWD(A)$ and $NWD(Bx)-NWD(B)$ and assigning element $x$ to 
whichever class achieves the minimum difference. 
A combination of pairwise NGD's for each class suffers in many cases 
from shortcomings as pointed out before and formally 
in Example~\ref{ex.NGDvsNWD}. Therefore, with the aim of doing better,
for the pairwise NWD we use an approach based on spectral clustering. 
Rather than using a combination of simple pairwise information distances 
(NGD's), the spectral approach constructs a representation of the 
objects being clustered using an eigen-decomposition. 
In previous work we have found such spectral approaches to be most accurate 
when working with compression-based distance measures \cite{CO09,CO10,CO16}.
Mapping from clusters to classes for the pairwise analysis is done 
following the spectral clustering step by using a majority vote. 

\subsection{Example Applications}
We now describe results from a number of  sample applications. For all of these applications, we use a single implementation based on co-occurrence counts. For each search engine that we used, including Amazon, Wikipedia and NCBI a custom MATLAB script  was developed to parse the search count results. We used the page counts returned using the built in search from each website for the frequencies, and following the approach in \cite{CV07}  choose $N$ as the frequency for the search term 'the'.  The results described were not sensitive to the choice of search term used to establish $N$, for example identical classification results were obtained using the counts returned by the search term 'N' as the normalizing factor. Following each classification result below, we include in parenthesis the 95\% confidence interval for the result, computed as described in \cite{Witten2005}

The first three classification questions we considered used the wikipedia search engine. These questions include classifying colors vs. animals, classifying colors vs. shapes and classifying presidential candidates by political party for  the US 2008 U.S. presidential election. For colors vs animals and shapes, both pairwise and multiset NWD classified all of the elements 100\% correctly (0.82,1.0). For the presidential candidate classification by party, the pairwise NWD formulation performed poorly, classifying 58\% correctly (0.32,0.8), while the set formulation obtained 100\% correct classification (0.76,1.0). Table \ref{tab.WikipediaResults} shows the data used for each question, together with the pairwise and set accuracy and the total number of website queries required for each method. 

\begin{table}[tbp]
\centering
\includegraphics[width=8cm]{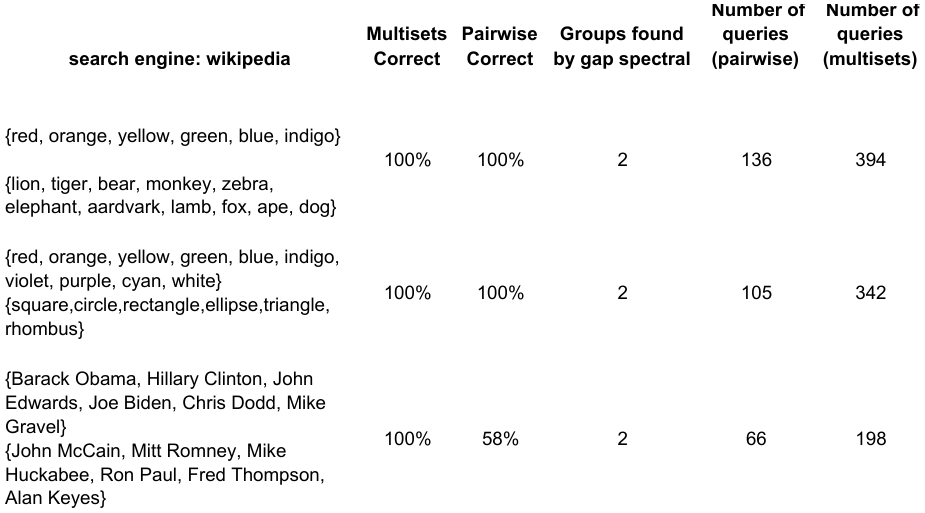}
\caption{Classification results using wikipedia.}
\label{tab.WikipediaResults}
\end{table}
The next classification question considered used page counts returned by the Amazon website search engine to classify book titles by author. Table \ref{tab.Amazon} summarizes the sets of novels associated with each author, and the classification results for each author as a confusion matrix. The Multiset NWD (top) misclassified one of the Tolstoy novels ('War and Peace') to Stephen King, but correctly classified all other novels correctly, 96\% accurate (0.83,0.99). The pairwise NWD performed significantly more poorly, achieving only 79\% accuracy (0.6,0.9).

\begin{table}[tbp]
\centering
\includegraphics[width=8cm]{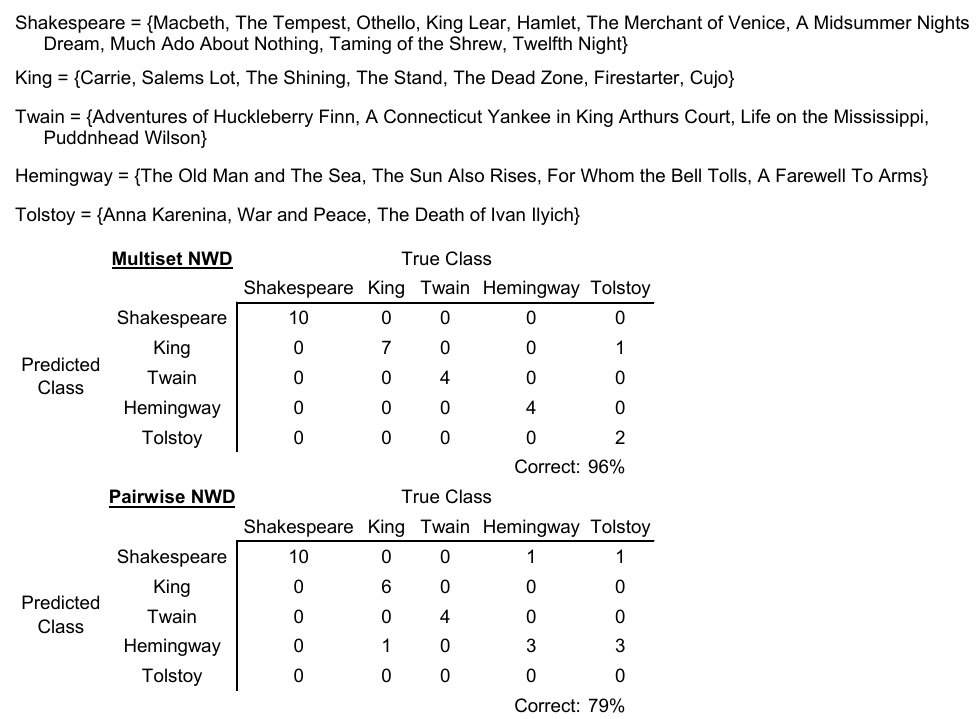}
\caption{Classifying novels by author using Amazon}
\label{tab.Amazon}
\end{table}
The final application considered is to quantify similarities among diseases based on the results of genome wide association studies (GWAS).  These studies scan the genomes from a large population of individuals to identify genetic variations occurring at fixed locations, or loci that can be associated with the given disease. Here we use the the NIH NCBI database to search for similarities among diseases, comparing loci identified by recent GWAS results for each disease. The diseases included Alzheimers \cite{Alzh}, Parkinsons \cite{Park}, Amyotrophic lateral sclerosis (ALS) \cite{ALS}, Schizophrenia \cite{Schiz}, Leukemia \cite{Leuk}, Obesity \cite{Obes}, and Neuroblastoma \cite{Neur}. The top of Table \ref{tab.pubmed} lists the loci used for each disease. The middle panel of Table \ref{tab.pubmed} shows at each location $(i,j)$ of the distance matrix the NWD computed for the combined counts for the loci of disease $i$ concatenated with disease $j$. The diagonal elements $(i,i)$ show the NWD for the loci of disease $i$. The bottom panel of Table \ref{tab.pubmed} shows the NWD for each element with the diagonal subtracted,  $(i,j)-(i,i)$. This is equivalent to the $NWD(Ax)-NWD(A)$ value used in the previous classification problems.  The two minimum values in the bottom panel, showing the relationships between Parkinsons and Obesity, as well as between Schizophrenia and Leukemia were surprising. The hypothesis was that neurological disorders such as Parkinsons, ALS and Alzheimers, would be more similar to each other. After these findings we found that there actually have been recent findings of strong relationships between both Schizophrenia and Leukemia \cite{SchizLeuk} as well as between Parkinsons and Obesity \cite{ParkObes}, relationships that have also been identified by clinical evidence not relating to GWAS approaches.

\begin{table}[tbp]
\centering
 \includegraphics[width=8cm]{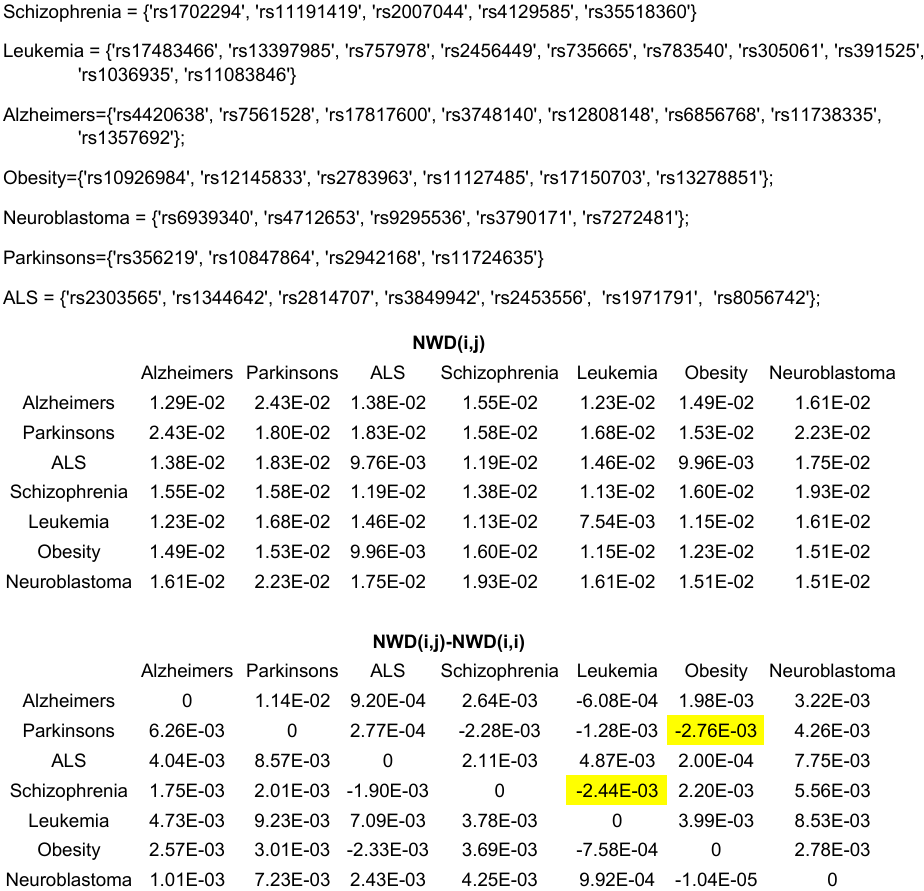}
\caption{GWAS loci from NIH NCBI input to NWD quantifies disease similarity.}
\label{tab.pubmed}
\end{table}

%\FloatBarrier

\section{Software Availability}
Free and open source (BSD) software implementations for the NWD are available from https://git-bioimage.coe.drexel.edu/opensource/nwd.

\section{Conclusion}\label{sect.concl}

Consider queries to a search engine using a data base divided
in chunks called web pages. On each query the search engine 
returns a set of web pages. Let $n$ be the cardinality of a query set
and $N$ the number of web pages in the data base multiplied by
the average number of search terms per web page.
We propose a method, the normalized web distance (NWD) for sets of
queries that quantifies in a single number between 0 and $(\log_n (N/n))/(n-1)$ 
the way in which the queries in the set are similar: 0 means 
all queries in the set are the same (the set has cardinality one) 
and $(\log_n (N/n))(n-1)$ means all queries in the
set are maximally dissimilar to each other.
The similarity among queries
uses the frequency counts of web pages returned for each query and the
set of queries.
The method can be applied using any big data
base and a search engine
that returns reliable aggregate page counts. Since this method uses names for 
the objects, and not the objects themselves, we can view 
the common similarity of the names
as a common semantics between those names (words or phrases).
The common similarity between a finite nonempty set of
queries can be viewed as a distance or diameter of this set. 
We show that this distance ranges in between 0 and 
$(\log_n (N/n))/(n-1)$, how it
changes under adding members to the set, that it does not
satisfy the triangle property, and that the NWD formally and provably 
expresses common similarity  (common semantics). 

To test the efficacy of the new method for classification 
we experimented with small data sets of queries
based on search results from Wikipedia, Amazon, and the 
National Center for Biotechnology Information (NCBI) website 
from the U.S. National Institutes of Health.
In particular we compared classification using pairwise NWDs (the NGDs) with
classification using set NWD. The last mentioned performed
consistently equal or better, sometimes much better.

\appendices
\section{Strings and the Self-Delimiting Property}\label{sect.set}
We write {\em string} to mean a finite binary string,
and $\epsilon$ denotes the empty string.
(If the string is over
a larger finite alphabet we recode it into binary.)
The {\em length} of a string $x$ (the number of bits in it)
is denoted by $|x|$. Thus,
$|\epsilon| = 0$. 
The {\em self-delimiting code} for $x$ of length $n$ is
$\bar{x} = 1^{|x|}0x$ of length $2n+1$, or even shorter
$x'=1^{\bar{x}}0x$ of length $n+2 \log n +1$ (see \cite{LV08} for still shorter
self-delimiting codes). Self-delimiting code words encode where they end.
The advantage is that 
if many strings of varying lengths are encoded self-delimitingly using the
same code, then their concatenation can be parsed in their constituent 
code words in one pass going from left to right. Self delimiting codes
are computable prefix codes. A {\em prefix code} has the property
that no code word is a proper prefix of any other code word. The code-word
set is called {\em prefix-free}.

%A {\em multiset} is a generalization of the notion of set.
%The members are allowed to appear more than once.
%For example, if $x\neq y$ then $\{x,y\}$ is a set, but $\{x,x,y\}$
%and $\{x,x,x,y,y\}$ are multisets, with abuse of the set notation.
%We also abuse the
%set notation by using it
%for multisets by writing $x \in \{x,x,y\}$ and $z \not\in \{x,x,y\}$
%for $z \neq x,y$. Further, $\{x,x,y\} \setminus \{x\} = \{x,y\}$. 
%For us, a multiset is finite and nonempty such as
%$\{x_1 , \ldots ,x_n\}$ with $0 < n < \infty$ and the members are
%finite binary strings in
%length-increasing lexicographic order. 
%If $X$ is a multiset, then some
%or all of its elements may be equal. $x_i \in X$ means
%that ``$x_i$ is an element
%of multiset $X$.''
%If $X,Y$ are multisets then we use the notation $XY=X \bigcup Y$;
%if $Z$ is a multiset and $X=YZ$, then
%we write $Y \subset X$.
%With $\{x_1 ,\ldots , x_n\} \setminus \{x\}$ we mean the multiset
%$\{x_1 \ldots x_n\}$
%with one occurrence of $x$ removed.
We identify strings with natural numbers
by associating each string with its index
in the length-increasing lexicographic ordering according to the scheme
$
( \epsilon , 0),  (0,1),  (1,2), (00,3), (01,4), (10,5), (11,6), 
\ldots . 
$
In this way the Kolmogorov complexity can be about finite binary
strings or natural numbers.

\section{Computability Notions}\label{sect.comp}

A pair of integers
such as $(p,q)$ can be interpreted as the rational $p/q$.
We assume the notion of a  function with rational arguments
and values.
A function $f(x)$ with $x$ rational is \emph{upper semicomputable}
if it is defined by a rational-valued total computable function $\phi(x,k)$
 with $x$ a rational number
and $k$ a nonnegative integer
such that $\phi(x,k+1) \leq \phi(x,k)$ for every $k$ and
  $\lim_{k \rightarrow \infty} \phi (x,k)=f(x)$.
This means that $f$ can be computed from above
 (see \cite{LV08}, p. 35).  A function $f$ is  \emph{lower semicomputable}
if $-f$ is semicomputable from above.
 If a function is both upper semicomputable
and lower semicomputable then it is \emph{computable}.

\section{Kolmogorov Complexity}\label{sect.kolmcomp}
The Kolmogorov complexity is the information in a single finite object
\cite{Ko65}.
Informally, the Kolmogorov complexity of a finite binary string
is the length of the shortest string from which the original
can be lossless reconstructed by an effective
general-purpose computer such as a particular universal Turing machine.
Hence it constitutes a lower bound on how far a
lossless compression program can compress.
For technical reasons we choose Turing machines with a separate
read-only input tape that is scanned from left to right without backing up,
a separate work tape on which the computation takes place,
an auxiliary tape inscribed with the {\em auxiliary} information,
and a separate output tape. All tapes are divided into squares
and are semi-infinite. Initially,
the input tape contains a semi-infinite binary 
string with one bit per square
starting at the leftmost square, and all heads scan the leftmost squares
on their tapes. Upon halting, the initial segment $p$
of the input that has been scanned is called the input program
and the contents of the output tape is called the output.
By construction, the set of halting programs is prefix free 
(Appendix~\ref{sect.set}), and this type of Turing machine is called
a {\em prefix Turing machine}.
A standard enumeration of prefix Turing machines $T_1,T_2, \ldots$
contains a universal machine $U$ such that $U(i,p,y)=T_i(p,y)$ for all
indexes $i$, programs $p$, and auxiliary strings $y$. 
(Such universal machines are called
``optimal'' in contrast with universal machines like $U'$ 
with $U'(i,pp,y)=T_i(p,y)$
for all $i,p,y$, and $U'(i,q,y)=1$ for $q \neq pp$ for some $p$.)
We call $U$ the {\em reference universal prefix Turing machine}.
This leads to the definition of prefix Kolmogorov complexity.

Formally, the {\em conditional prefix Kolmogorov complexity}
$K(x|y)$ is the length of the shortest input $z$
such that the reference universal prefix Turing machine 
$U$ on input $z$ with
auxiliary information $y$ outputs $x$. The
{\em unconditional Kolmogorov complexity} $K(x)$ is defined by
$K(x|\epsilon)$ where $\epsilon$ is the empty string.
In these definitions both $x$ and $y$ can consist of strings into which
finite sets of finite binary strings are encoded.
Theory and applications are given in the textbook \cite{LV08}.

For a finite set of strings we assume that the strings
are length-increasing lexicographic ordered.
This allows us to assign
a unique Kolmogorov complexity to a set.
The conditional prefix Kolmogorov complexity $K(X|x)$ of a set
$X$ given an element $x$ is the length of a shortest program $p$
for the reference universal Turing machine that with input $x$
outputs the set $X$. The prefix Kolmogorov complexity $K(X)$ of a set
$X$ is defined by $K(X| \epsilon )$.
One can also put set in the conditional such as $K(x|X)$ or
$K(X|Y)$.
We will use the straightforward laws $K(\cdot|X,x)=K(\cdot|X)$
and $K(X|x)=K(X'|x)$ up to an additive constant term, for $x \in X$
and $X'$ equals the set $X$ with the element $x$ deleted.

%The prefix Kolmogorov complexity $K(x)$ of $x$ is the length in bits of
%the shortest program from which a fixed universal computer 
%can reconstruct $x$.
%In \cite{Li08} the information distance is extended to sets.
%One approximates the prefix Kolmogorov complexity from above by a real-world
%%compressor $Z$. 
%The prefix Kolmogorov complexity $K(x)$ is therefore a lower bound 
%on the length of a compressed version of $x$
%using any known compressor program (or not yet known known compressor program),
%\cite{LV08}. Using a compressor $Z$ and denoting by $Z(x)$ the length in bits
%of the compressed version of $x$ using $Z$, we have $K(x) < Z(x)$.
%The same sort of relation holds replacing $x$ by a multiset $X$ and
%also for the conditional versions.
%The information distance in \cite{BGLVZ98}
%is expressed in terms of prefix Kolmogorov complexity.

We use the following notions from the theory of Kolmogorov complexity.
The {\em symmetry of information} property \cite{Ga74} for strings $x,y$ is 
\begin{equation}\label{eq.soi}
K(x,y)=K(x)+K(y|x)=K(y)+K(x|y),
\end{equation}
with equalities up to an additive term $O(\log(K(x,y)))$.

\section{Metricity}\label{sect.metric}
A {\em distance function} $d$
on ${\cal X}$ is defined by $d:{\cal X} \rightarrow {\cal R}^+$ where
${\cal R}^+$ is the set of nonnegative real numbers.
If $X,Y,Z \in {\cal X}$, then $Z=XY$ if
$Z$ is the set consisting of the elements of the sets $X$ and $Y$
ordered length-increasing lexicographic.
A distance function
$d$ is a {\em metric} if
\begin{enumerate}
\item
{\em Positive definiteness}: $d(X)=0$ if all elements of $X$ are equal
and $d(X) > 0$ otherwise. (For sets equality of all members means $|X|=1$.)
\item
{\em Symmetry}: $d(X)$ is invariant
under all permutations of $X$.
\item
{\em Triangle inequality}: $d(XY) \leq d(XZ)+d(ZY)$.
\end{enumerate}
%We recall Theorem 4.1 from  \cite{Vi11}.
%\begin{theorem}\label{theo.metric}
%The information distance $E_{\max}$ for sets is a metric
%where the (in)equalities hold up to a $O(\log K)$ additive term,
%where $K$ is the largest quantity
%involved in each metric (in)equality 1 to 3, respectively.
%\end{theorem}
%\begin{corollary}\label{cor.metric}
%\rm
%Let $X,Y,Z \in {\cal X}$
%and $K=K(XYZ)$. Then,
%$E_{\max} (XY) \leq E_{\max} (XZ) + E_{\max} (ZY)$ up to an $O(\log K)$
%additive term.
%\end{corollary}

\section{Proofs}\label{sect.proofs}

\begin{proof} of Lemma~\ref{lem.egmax}.

Run all programs dovetailed fashion and at each 
time instant select a shortest
program that with inputs $e(x)$ for all $x \in X$ has terminated with 
the same output $e(X)$. The lengths of these shortest programs
gets shorter and shorter, 
and in for growing time eventually reaches $EG_{\max}(X)$ (but we do not know
the time for which it does).
Therefore $EG_{\max}(X)$ is upper semicomputable.  
It is not computable since for $X=\{x,y\}$ we have $EG_{\max}(X)=
\max\{K(e(x)|e(y)),K(e(y)|e(x))\}+O(1)$, the information distance 
between $e(x)$ and $e(y)$ which is known to be incomputable \cite{BGLVZ98}.
\end{proof}

\begin{proof} of Theorem~\ref{theo.just}.

($\leq$) 
We use a modification of the proof of \cite[Theorem 2]{Li08}.
According to Definition~\ref{def.webevent} $x=y$ iff $e(x)=e(y)$.
Let $X= \{x_1, \ldots , x_n\}$ and $k=\max_{x \in X} \{K(e(X)|e(x)\}$. 
A set of cardinality $n$ in $S$ is for the purposes of this proof
represented by an $n$-vector of which the
entries consist of the lexicographic length-increasing sorted members 
of the set. For each $1 \leq i \leq n$ let ${\cal Y}_i$ be the set of 
computably enumerated $n$-vectors $Y=(y_1, \ldots, y_n)$ with entries in $S$ 
such that $K(e(Y)|e(y_i)) \leq k$ for each $1 \leq i \leq n$.
Define the set $V = \bigcup_{i=1}^n {\cal Y}_i$. 
This $V$ is the set of vertices
of a graph $G=(V,E)$. The set of edges $E$ is defined by: 
two vertices $u=(u_1, \ldots , u_n)$
and $v=(v_1, \ldots ,v_n)$ are connected by an edge 
iff there is $1 \leq j \leq n$ such that $u_j=v_j$. 
There are at most $2^k$
self-delimiting programs of length at most $k$ computing from input 
$e(u_j)$ to different $e(v)$'s with $u_j$ in vertex $v$ as $j$th entry. 
Hence there
can be at most $2^k$ vertices $v$ with $u_j$ as $j$th entry. Therefore,
for every $u \in V$
and $1 \leq j \leq n$ there are at most $2^k$ vertices
$v \in V$ such that $v_j=u_j$. 
The vertex-degree of graph $G$ is
therefore bounded by $n2^k$. Each graph can be vertex-colored 
by a number of colors equal to the maximal vertex-degree. This divides
the set of vertices $V$ into disjoint color classes
$V=V_1 \bigcup \cdots \bigcup V_D$ with $D \leq n2^k$. To compute $e(X)$
from $e(x)$ with $x \in X$ we only need the color class of which $e(X)$
is a member and the position of $x$ in $n$-vector $X$.  
Namely, by construction every vertex with the same
element in the $j$th position is connected by an edge. Therefore there is
at most a single vertex with $x$ in the $j$th position 
in a color class.
Let $x$ be the $j$th entry of $n$-vector $X$.
It suffices to have a program of length at most 
$\log (n2^k) +O(\log nk) = k+ O(\log nk)$ bits to compute $e(X)$ from $e(x)$. 
From $n$ and $k$ we can generate $G$ and given $\log (n2^k)$ bits we
can identify the color class $V_d$ of $e(X)$. Using another $\log n$ bits
we define the position of $x$ in the $n$-vector $X$.
To make such a program self-delimiting add a logarithmic term. In total 
$k+O(\log k)$ suffices since $O( \log k)=O(\log n + \log nk)$.

($\geq$) That $EG_{\max}(X) \geq \max_{x \in X} \{K(e(X)|e(x)\}$ follows
trivially from the definitions.
\end{proof}

\begin{proof} of Lemma~\ref{lem.0}.

($\geq 0$) 
Since $f(X) \leq f(x)$ for all $x \in X$ the numerator of
the right-hand side of \eqref{eq.NWD} is nonnegative. 
Since the denominator is also nonnegative we have $NWD(X) \geq 0$.
Example of the lower bound: if 
$\max_{x \in X}\{\log f(x)\}= \log f(X)$, then $NWD(X)=0$.

($\leq (\log_{|X|} (N/|X|)) /(|X|-1)$) 
Write $n=|X|$,  $x_M= \arg\max_{x \in X} f(x)$ and 
$x_m= \arg\min_{x \in X} f(x)$.
Rewrite \eqref{eq.NWD} as $(n-1)NWD(X)= \log (f(x_M)/f(X))/\log (N/f(x_m))$.  
This expression can only reach its maximum if 
$f(X)$ is as small as possible which can be achieved independent 
of the other parameters. 
To this end the web events $e(x)$ for $x \in X$
satisfy $\bigcap_{x \in X} e(x)$ is a singleton set which means that $f(X)=1$.
(For $f(X)=0$ we have $\bigcap_{x \in X} e(x)= \emptyset$ and $NWD(X)$ is
undefined.) 
For $f(X)=1$ the expression can be rewritten as 
$(n-1)NWD(X)= \log_{N/f(x_m)} f(x_M) = \alpha$
where $\alpha$ is determined by $(N/f(x_m))^{\alpha}=f(x_M)$.
The side conditions which must be satisfied are $f(x_m) \leq f(x_M)$ and 
$(n-1)f(x_m)+f(x_M) \leq N$. 
For any fixed $f(x_M)$ the value of $\alpha$ is maximal
if $f(x_m)$ is as large
as possible which means that $f(x_m)= f(x_M)$.
Then $f(x_M)= N^{\alpha/(\alpha+1)}$. 
With $\bigcup_{x \in X}e(x)=\Omega$
and $\bigcap_{x \in X}e(x)$ is a singleton set we have 
$f(x_M)=(N-1)/n +1$. It follows that
$\log ((N+n-1)/n) = (\alpha/(\alpha+1)) \log N$. Rewriting yields first
$1-\log_N ((N+n-1)/n) = 1/(\alpha +1)$ and then 
$\alpha = (1/(1-\log_N ((N+n-1)/n)))-1= (1/\log_N(Nn/(N+n-1)))-1$.
Hence $NWD(X) \leq (1/\log_N(Nn/(N+n-1))-1)/(n-1) < (1/\log_N n -1)/(n-1)
= (\log_n (N/n))/(n-1)$.
\end{proof}

\begin{proof} of Lemma~\ref{claim.1}.

{\rm (i)} Since $X \subseteq Y$ and because of 
the condition of item (i) we have 
$\min_{y \in Y}\{\log f(y)\} = \min_{x \in X}\{\log f(x)\}$. From 
$X \subseteq Y$ also follows 
$\max_{y \in Y}\{\log f(y)\} \geq \max_{x \in X}\{\log f(x)\}$, and
$\log f(X) \geq \log f(Y)$.
Therefore the numerator
of $NWD(Y)$ is at least as great as that of $NWD(X)$, and
the denominator
of $NWD(Y)$ equals $(|Y|-1)/(|X|-1)$ times the denominator of $NWD(X)$.

{\rm (ii)} We have
$\min_{x \in Y} \log f(y) < \min_{x \in X} \{\log f(x)\}$.
If $NWD(X)$ is maximal
then $NWD(Y)$ is maximal (in both cases there is least common similarity
of the members of the set). Item {\rm (ii)} follows vacuously in this case.
Therefore assume that $NWD(X)$ is less than maximal. 
Write $NWD(X)=a/b$ with $a$ equal to the numerator of $NWD(X)$ and $b$
equal to the denominator. If $c,d$ are real numbers satisfying
$c/d \geq a/b$ then $bc \geq ad$. Therefore $ab+bc \geq ab+ad$ which
rearranged yields $(a+c)/(b+d) \geq a/b$. If $c/d < a/b$ then by similar
reasoning $(a+c)/(b+d) < a/b$.

Assume \eqref{eq.cond} holds.
We take the logarithms of both sides of \eqref{eq.cond} and rearrange it
to obtain $\log f(X) - \max_{x \in X}\{\log f(x)\}- 
\log f(Y) + \max_{y \in Y} \{\log f(y)\} \geq 
(\min_{x \in X}\{\log f(x)\}- \min_{y \in Y} \{\log f(y)\})(|X|-1)NWD(X)$.
Let the lefthand side of
the inequality be $c$ and the righthand side of the inequality be $dNWD(X)$. 
Then
\begin{align}\label{eq.nwdxy}
NWD(X) & = \frac{\max_{x \in X} \{\log f(x)\}- \log f(X)}
{(\log N - \min_{x \in X} \{\log f(x)\})(|X|-1)} 
%\\& \leq \frac{\max_{x \in X} \{\log f(x)\}- \log f(Y)}
%{(\log N - \min_{y \in Y} \{\log f(y)\})(|X|-1)}
%\nonumber
\\& \leq \frac{\max_{y \in Y} \{\log f(y)\}- \log f(Y)}
{(\log N - \min_{y \in Y} \{\log f(y)\})(|X|-1)}
\nonumber
\\&=\frac{|Y|-1}{|X|-1}NWD(Y).
\nonumber
\end{align}
The inequality holds by the rewritten \eqref{eq.cond}
and the $a,b,c,d$ argument above since $c/d \geq NWD(X)=a/b$. 
%The second inequality holds because 
%$\max_{x \in X} \{\log f(x)\} \leq \max_{y \in Y} \{\log f(y)\}$. 

Assume \eqref{eq.cond} does not hold, that is, it holds with 
the $\geq$ sign replaced by a $<$ sign.
We take logarithms of both sides of this last version and rewrite it
to obtain $\log f(X) - \max_{x \in X}\{\log f(x)\}- 
\log f(Y) + \max_{y \in Y} \{\log f(y)\} <
(\min_{x \in X}\{\log f(x)\}- \min_{y \in Y} \{\log f(y)\})(|X|-1)NWD(X)$.
Let the lefthand side of the inequality be $c$ and the righthand 
side $dNWD(X)$. Since $c/d <NWD(X)=a/b$ we have $a/b >(a+c)/(b+d)$ by
the $a,b,c.d$ argument above. 
Hence \eqref{eq.nwdxy} holds with 
the $\leq$ sign switched to a $>$ sign.
% where again the 
%second equality holds because
%$\max_{x \in X} \{\log f(x)\}=\max_{y \in Y} \{\log f(y)\}$
%and only the frequency minima of $X$ and $Y$ are different.
It remains to prove that $NWD(Y) \geq NWD(Z)(|Z|-1)/(|Y|-1)$.
This follows directly from item (i).
\end{proof}

\begin{proof} of Lemma~\ref{theo.triangle}.

The following is a counterexample.
Let $X=\{x_1\}$, $Y=\{x_2\}$, $Z=\{x_3,x_4\}$, 
$\max_{x \in XY} \{\log f(x)\} =10$, $\max_{x \in XZ} \{\log f(x)\}=10$, 
$\max_{x \in ZY} \{\log f(x)\}=5$,
$\log f(XY)= \log f(XZ)= \log f(ZY)=3$, 
$\min_{x \in XY}\{\log f(x)\} = \min_{x \in XZ}\{f(x)\}
=\min_{x \in ZY}\{\log f(x)\} =4$, and $\log N = 35$. This arrangement
can be realized for queries $x_1,x_2,x_3,x_4$. 
(As usual we assume 
that $e(x_i) \neq e(x_j)$
for $1 \leq i,j \leq 4$ and $i \neq j$.)
Computation shows
$NWD(XY)>NWD(XZ)+NWD(ZY)$ since $7/31 > 7/62+1/62$.
\end{proof}

\begin{proof} of Theorem~\ref{theo.ideal}.

We start with the following:
\begin{claim}\label{claim.wdf}
\rm
$EG_{\max}(X)$ is an admissible web distance function and
$EG_{\max}(X) \leq D(X)$ for every computable 
admissible web distance function $D$.
\end{claim}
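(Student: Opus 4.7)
The plan is to verify the two assertions separately: that $EG_{\max}$ is itself admissible, and that it is minimal among computable admissible distances (up to a logarithmic slack of the same kind already granted by Theorem~\ref{theo.just}).

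For admissibility, I check the three conditions of Definition~\ref{def.wdf}. Nonnegativity is immediate since $EG_{\max}(X)$ is a minimum of program lengths, and extending the definition by $EG_{\max}(\{x\})=0$ (the empty program computes $e(x)$ from $e(x)$) ensures $EG_{\max}(X)=0$ exactly when $|X|=1$. Upper semicomputability is Lemma~\ref{lem.egmax}. For the density requirement, fix $x\in S$ and, for each $X\ni x$ with $|X|\geq 2$, let $p_X$ be a witnessing program of length $EG_{\max}(X)$ with $U(e(x),p_X)=e(X)$. Because $U$ is a prefix machine, its halting programs on auxiliary input $e(x)$ form a prefix-free set, so Kraft's inequality yields $\sum_X 2^{-|p_X|}\leq 1$. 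The only way two distinct $X,X'$ could share a witness program is that $e(X)=e(X')$, and the equivalence convention of Definition~\ref{def.webevent} identifies such sets, so there is no overcount.

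For the minimality, let $D$ be computable and admissible. I adapt the graph-coloring technique already used in the proof of Theorem~\ref{theo.just}. Set $k=\lceil D(X)\rceil$ and $n=|X|$, and form the decidable family ${\cal F}_{k,n}=\{Y\in{\cal X}:|Y|=n,\ D(Y)\leq k\}$. Build a graph on ${\cal F}_{k,n}$ joining any two sets that share an element. By the density requirement applied to each $y\in S$, at most $2^k$ members of ${\cal F}_{k,n}$ contain $y$, so the maximum degree is bounded by $n\cdot 2^k$ and a greedy coloring uses at most $n\cdot 2^k+1$ colors; any two sets sharing an element receive distinct colors. The program $p$ for $X$ is the self-delimiting concatenation of binary encodings of $k$, $n$, and the color index $c(X)$, of total length $\log(n\cdot 2^k+1)+O(\log k+\log n)=D(X)+O(\log D(X)+\log|X|)$. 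Given input $e(x)$ with $x\in X$, the universal machine parses $(k,n,c(X))$, reconstructs ${\cal F}_{k,n}$ and its coloring (possible since $D$ is computable), enumerates $Y\in{\cal F}_{k,n}$ containing $x$, and outputs $e(Y^*)$ for the unique $Y^*\ni x$ with color $c(X)$. Uniqueness holds because $X$ and any other $Y^*\ni x$ would be adjacent and hence differently colored. Thus $EG_{\max}(X)\leq D(X)+O(\log D(X)+\log|X|)$, and the lower-order term is absorbed into the logarithmic slack already built into the framework.

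The hardest part will be keeping the self-delimiting overhead clean: one needs the encodings of $k$, $n$ and the color index to fit within the stated $O(\log D(X)+\log|X|)$ slack, which forces care with compact self-delimiting codes for the integer parameters and with arguing that the recursive enumeration of ${\cal F}_{k,n}$ together with its greedy coloring can be carried out effectively from the computability of $D$. A secondary subtlety, already touched on above, is that the Kraft argument for $EG_{\max}$ itself relies on the equivalence convention in Definition~\ref{def.webevent} to prevent distinct $X$'s from sharing a witness program and thereby violating density.
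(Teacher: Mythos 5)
Your admissibility half follows essentially the same route as the paper (items (i) and (ii) are immediate, and the density requirement comes from applying the Kraft inequality to the prefix-free set of halting programs of $U$ on auxiliary input $e(x)$). You are right to worry about overcounting: two distinct sets $X\neq X'$ containing $x$ can have $e(X)=e(X')$ and hence share a witness program, and the singleton equivalence convention of Definition~\ref{def.webevent} does not by itself identify such \emph{sets} (e.g.\ $X=\{x,y\}$ and $X'=\{x,y,z\}$ with $e(x)\cap e(y)\subseteq e(z)$). This caveat is equally present, and unaddressed, in the paper's own one-line Kraft argument, so I do not count it against you, but your appeal to the equivalence convention does not actually close it.

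For minorization you take a genuinely different route. The paper sets $f(X,x)=2^{-D(X)}$ for $x\in X$, observes that by the density requirement this is a computable conditional semimeasure, and invokes the universal distribution: $c_D\,{\bf m}(X|x)\geq f(X,x)$ with $c_D=2^{K(D)+O(1)}$ together with $-\log {\bf m}(X|x)=K(X|x)+O(1)$ \cite{LV08} gives $D(X)\geq K(X|x)-K(D)-O(1)$ for every $x\in X$, hence $D(X)\geq\max_{x\in X}K(e(X)|e(x))-O(1)=EG_{\max}(X)-O(1)$ via Theorem~\ref{theo.just}, with an additive constant depending only on $D$ and uniform in $X$. Your graph-coloring argument is more elementary and self-contained --- it recycles the machinery already used to prove Theorem~\ref{theo.just} instead of the coding theorem --- but it buys a weaker, $X$-dependent slack $O(\log D(X)+\log|X|)$, which is tolerable only because the whole framework already discards logarithmic terms.

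There is one concrete gap in your construction: the program you describe consists only of encodings of $k$, $n$ and the color index $c(X)$, yet you then ask the universal machine to ``reconstruct ${\cal F}_{k,n}$ and its coloring,'' which requires knowing $D$. The program must additionally contain a self-delimiting description of (a program computing) $D$, costing $K(D)+O(1)$ further bits. This is the same $D$-dependent constant the paper's proof carries in $c_D$, so it does not invalidate the argument, but as written your program is incomplete. A secondary point: for a real-valued computable $D$ the family ${\cal F}_{k,n}$ need not be decidable (the test $D(Y)\leq k$ is problematic at the boundary); you should instead computably enumerate the family (say with $D(Y)<k+1$, costing one extra bit in the degree bound) and perform the greedy coloring on-line as vertices appear, which still uses at most maximum degree plus one colors and preserves the property that adjacent vertices get distinct colors.
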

\begin{proof}
Clearly $EG_{\max}(X)$ satisfies items (i) and (ii) of 
Definition~\ref{def.wdf}. To show it is an admissible web 
distance it remains to
establish the density requirement (iii). For fixed $x$ consider
the sets $X \ni x$ and $|X| \geq 2$. We have
\[
\sum_{X:  X \ni x \;\&\;|X| \geq 2}2^{-EG_{\max}(X)}
\leq 1,
\]
since for every $x$ the set $\{EG_{\max}(X): X \ni x\;\&\;EG_{\max}(X)>0\}$ 
is the length set of a binary prefix code and 
therefore the summation above satisfies the Kraft inequality 
\cite{Kr49} given by \eqref{eq.kraft}. 
Hence $EG_{\max}$ is an admissible distance.

It remains to prove minorization.
Let $D$ be a computable admissible web distance, 
and the function $f$ defined 
by $f(X,x) = 2^{-D(X)}$ for $x \in X$ and 0 otherwise.
Since $D$ is computable the function $f$ is computable. 
%(By item (iii) in Definition~\ref{def.wdf} we have 
%$\sum_{X \ni x} f(X,x) \leq 1$ for every fixed $x \in S$.) 
Given $D$, one can
compute $f$ and therefore $K(f) \leq K(D)+O(1)$. 
Let ${\bf m}$ denote
the universal distribution \cite{LV08}. 
By \cite[Theorem 4.3.2]{LV08} $c_D {\bf m}(X|x) \geq f(X,x)$ with 
$c_D=2^{K(f)}=2^{K(D)+O(1)}$, that is, $c_D$ is
a positive constant depending on $D$ only. By \cite[Theorem 4.3.4]{LV08}
we have $-\log {\bf m}(X|x)=K(X|x)+O(1)$. Altogether,
for every $X \in {\cal X}$ and for every $x \in X$ holds
$\log 1/ f(X,x) \geq K(X|x)+\log 1/c_D +O(1)$. 
Hence $D(X) \geq EG_{\max}(X) +\log 1/c_D+O(1)$.
\end{proof}

By Lemma~\ref{lem.egmax} the function $EG_{\max}$ is upper semicomputable
but not computable. The function $G(X)-\min_{x \in X}\{G(x)\}$ is a computable 
and an admissible function as in Definition~\ref{def.wdf}.
By Claim~\ref{claim.wdf} it is an upper bound on $EG_{\max}(X)$ and hence 
$EG_{\max}(X) < G(X)-\min_{x \in X}\{G(x)\}$.
Every admissible property or feature that is common to all members
of $X$ is quantized as an upper bound on $EG_{\max}(X)$. Thus, 
the closer $G(X)-\min_{x \in X}\{G(x)\}$ approximates
$EG_{\max}(X)$, the better it approximates
the common admissible properties among all search terms in $X$. This 
$G(X)-\min_{x \in X}\{G(x)\}$ is
the numerator of $NWD(X)$. The denominator is 
$\max_{x \in X} \{G(x)\}(|X|-1)$, a
normalizing factor.
\end{proof}

\section*{Funding} \addcontentsline{toc}{section}{Acknowledgment}
Portions of this research were supported by  the National Institute On Aging 
of the National Institutes of Health under award number R01AG041861 to A. R. Cohen.

\section*{Conflict of Interest} \addcontentsline{toc}{section}{Conflict of Interest}
The authors declare that they have no conflict of interest.

\bibliographystyle{natbib}

\begin{thebibliography}{99}

\bibitem{ALS}
A.K. Ahmeti et al. Age of onset of amyotrophic lateral 
sclerosis is modulated by a 
locus on 1p34.1, {\em Neurobiology of Aging} 34:1(2013), 357.e357-357.e319.

\bibitem{BbA05}
J.P. Bagrow and D. ben-Avraham, On the Google-fame of scientists and other
populations, {\em AIP Conference Proceedings}
779:1(2005), 81--89.

\bibitem{BGLVZ98}
C.H. Bennett, P. G\'acs, M. Li, P.M.B. Vit\'anyi, and W. Zurek,
Information distance, {\em IEEE Trans. Inform. Theory},

44:4(1998), 1407--1423.

%\bibitem{BMI07}
%D. Bollegala, M. Yutaka, and I. Mitsuru, Measuring semantic similarity 
%between words using web search engines,  Proc. WWW., Vol. 766, 2007.

%\bibitem{CL10}
%P.-I. Chen and S.-J. Lin, Automatic keyword prediction using 
%Google similarity distance, {\em Expert Systems with Applications}, 
%37:3(2010), 1928--1938. 

\bibitem{CV07}
R.L. Cilibrasi and P.M.B. Vit\'anyi, The Google similarity distance, 
{\em IEEE Trans. Knowledge and Data Engineering}, 19:3(2007), 370-383.

\bibitem{CS04}
P. Cimiano and S. Staab, Learning by Googling,
{\em SIGKDD Explorations}, 6:2(2004), 24--33.

\bibitem{ParkObes}
H. Chen, et al., Obesity and the risk of Parkinson's disease,
{\em Am. J. Epidemiol.}, 159:6(2004), 547--555.

\bibitem{CO09}
A.R. Cohen, C. Bjornsson, S. Temple, G. Banker and B. Roysam, Automatic Summarization of Changes in Biological Image Sequences using Algorithmic Information Theory, {\em IEEE Trans. Pattern Anal. Mach. Intell.} 31(8):(2009) 1386-1403.

\bibitem{CO10}
A.R. Cohen, F. Gomes,  B.Roysam, and M. Cayouette, 
Computational prediction of neural progenitor cell fates, 
{\em Nature Methods}, 7:3(2010), 213--218.

%\bibitem{CV04}
%R.L. Cilibrasi, P.M.B. Vit\'anyi, Clustering by compression,
%{\em IEEE Trans. Inform. Theory}, 51:4(2005), 1523- 1545.

\bibitem{CV13}
A.R. Cohen and P.M.B. Vit\'anyi, Normalized compression distance of multisets 
with applications, {\em IEEE Trans. Pattern Analysis Machine Intelligence},
37:8(2015), 1602--1614.

\bibitem{Ga74}
P. G\'acs,
On the symmetry of algorithmic information,
{\em Soviet Math. Doklady}, 15:1477--1480, 1974.
Correction, Ibid., 15(1974), 1480.

%\bibitem{GKAH07}
%R. Gligorov, W. ten Kate, Z. Aleksovski and F. van Harmelen, Using Google 
%distance to weight approximate ontology matches, Proc. 16th Intl Conf. 
%World Wide Web, ACM Press, 2007, 767--776.

\bibitem{SchizLeuk}
H.S. Huang, et al., Prefrontal dysfunction in schizophrenia involves 
mixed-lineage leukemia 1-regulated histone methylation at GABAergic gene 
 promoters, {\em J. Neuroscience} 27:42(2007), 11254--11262.

\bibitem{CO16}
R.Joshi, et al., Automated measurement of cobblestone morphology for 
characterizing stem cell derived retinal pigment epithelial cell cultures, 
{\em J. Ocular Pharmacology Therapeutics}, 32:5(2016),331--339.

\bibitem{Alzh}
M.I. Kamboh,  et al. Genome-wide association study of Alzheimer's disease ,{\em Translational Psychiatry - Nature} 2 (2012): e117.

\bibitem{KL05}
F. Keller and M. Lapata,
Using the web to obtain frequencies for unseen bigrams,
{\em Computational Linguistics}, 29:3(2003), 459--484.

\bibitem{Ko65}
A.N. Kolmogorov,
{Three approaches to the quantitative definition of information},
{\em Problems Inform. Transmission} 1:1(1965), 1--7.

\bibitem{Kr49}
L.G. Kraft, A device for quantizing, grouping, and coding amplitude 
modulated pulses, MS Thesis, EE Dept., Massachusetts Institute of Technology,
Cambridge. Mass., USA, 1949.

\bibitem{LD97}
T. Landauer and S. Dumais, A solution to Plato's problem:
The latent semantic analysis theory of acquisition,
induction and representation of knowledge,
{\em Psychol. Rev.}, 104(1997), 211--240.

\bibitem{LCB}
Y. LeCun, C. Cortes and C.J.C. Burges, The MNIST database of handwritten digits,
http://yann.lecun.com/exdb/mnist/

\bibitem{Le74}
L.A. Levin,
Laws of information conservation (nongrowth) 
and aspects of the foundation of probability theory,
{\em Probl. Inform. Transm.}, 10(1974), 206--210.

%\bibitem{Li03}
%M. Li, X. Chen, X.~Li, B.~Ma, P.M.B.~Vit\'anyi.
%The similarity metric, {\em IEEE Trans. Inform. Theory}, 50:12(2004),
%3250- 3264.
\bibitem{LV08}
M. Li and P.M.B. Vit\'anyi.
{\em An Introduction to Kolmogorov Complexity
and its Applications}, Springer-Verlag, New York, Third edition, 2008.

\bibitem{Li08}
C. Long, X. Zhu, M. Li and B. Ma, Information shared by many objects,
Proc. 17th ACM Conf. Information and Knowledge Management,
2008, 1213--1220.

\bibitem{Neur}
J.M. Maris, Chromosome 6p22 Locus Associated with Clinically Aggressive 
Neuroblastoma, {\em New England Journal of Medicine} 358:24(2008), 2585--2593.


\bibitem{Mc56}
B. McMillan, Two inequalities implied by unique decipherability, 
{\em IEEE Trans. Information Theory}, 2:4(1956), 115–-116.

\bibitem{Michel11}
J.-B. Michel, Y.K. Shen, A.P. Aiden, A. Veres, M.K. Gray, T.G.B. Team, et al., 
Quantitative Analysis of Culture Using Millions of Digitized Books, 
{\em Science}, 331(2011), 176--182, (January 14 2011).

\bibitem{MCCD13}
T. Mikolov, K. Chen, G. Corrado and J. Dean,
Efficient estimation of word representations in vector space,
{\em ICLR Workshop}, 2013. Also arXiv:1301.3781.
 
\bibitem{Ng02}
A.Y. Ng, M. Jordan and Y. Weiss, On Spectral Clustering: Analysis 
and an algorithm, {\em Advances Neural Informat. Process. Systems}, 14, (2002).

\bibitem{Obes}
A. Scherag, et al., Two New Loci for Body-Weight Regulation Identified in a Joint Analysis of Genome-Wide Association Studies for Early-Onset Extreme Obesity in French and German Study Groups, {\em PLoS Genetics}, 6:4(2010), e1000916.

\bibitem{Schiz}
Schizophrenia Working Group of the Psychiatric Genomics Consortium,
Biological insights from 108 schizophrenia-associated genetic loci, 
{\em Nature} 511(7510), 2014, 421-427.

\bibitem{Sh48}
C.E. Shannon, The mathematical theory of communication,
{\em Bell System Tech. J.}, 27(1948), 379--423, 623--656.

\bibitem{Leuk}
F.C.M. Sill\'e, et al., Post-GWAS Functional Characterization 
of Susceptibility Variants for Chronic Lymphocytic Leukemia, {\em PLoS One},
7:1(2012), e29632.

\bibitem{Park}	
A.I. Soto-Ortolaza, A. I. et al., GWAS risk factors in Parkinson's disease: LRRK2 coding variation and genetic interaction with PARK16, {\em Am. J. 
Neurodegener Dis.} 2:4(2013), 287--299.


\bibitem{TKS02}
P.-N. Tan, V. Kumar and J. Srivastava, Selecting the right
interestingness measure for associating patterns. {\em Proc.
ACM-SIGKDD Conf. Knowledge Discovery and Data Mining},
2002, 491--502.


\bibitem{TC03}
E. Terra and C.L.A. Clarke, Frequency estimates for
statistical word similarity measures, 37/162 in
Human Language Theory Conference
(HLT/NAACL 2003), Edmonton, Alberta, 2003. 


%\bibitem{Tib01}
%R. Tibshirani, G. Walther, T. Hastie, Estimating the number of clusters in a 
%dataset via the gap statistic, {\em Journal of the Royal Statistical Society} 63:(2001) 411 - 423.

\bibitem{Vi11}
P.M.B. Vit\'anyi, Information distance in multiples, 
{\em IEEE Trans. Inform. Theory}, 57:4(2011), 2451-2456. 

\bibitem{Witten2005}
I.H. Witten and E. Frank, {\em Data Mining: Practical Machine Learning 
Tools and Techniques}, 2005.

%\bibitem{WM09}
%W.L. Woon, S. Madnick, Asymmetric information distances for 
%automated taxonomy construction, {\em Knowl. Inf. Systems}, 21(2009), 91--111.

%\bibitem{XWF08}
%Z. Xian, K. Weber, D.R. Fesenmaier,
%Representation of the online tourism domain in search engines,
%{\em J. Travel Research}, 47:2(2008), 137--150.
\end{thebibliography}

\end{document}